\newcommand\vldbavailabilityurl{URL_TO_YOUR_ARTIFACTS}
\newif\ifdraft\drafttrue
\newcommand\anthony[1]{{\color{blue}
\small [#1 - \textbf{Anthony}]}}
\newcommand\domagoj[1]{{\color{magenta}
\small [#1 - \textbf{Domagoj}]}}
\newcommand\heyang[1]{{\color{brown}
\small [#1 - \textbf{Heyang}]}}
\newcommand\heyangchanged[1]{{{#1}}}
\newcommand\anthony[1]{}
\newcommand\domagoj[1]{}
\newcommand\heyang[1]{}
\newcommand\heyangchanged[1]{#1}
\newcommand\todo[1]{}
\newcommand\pgraph{$G = (V, E, \rho, \lambda, \pi)$}
\newcommand\paut{$\mathcal{A} = (\Sigma, \chi, Attr, Q, Q_0, F, \Delta)$}
\newcommand\query{$\mathrm{PRPQ}(v, pregex)$}
\newcommand{\Vertices}{\ensuremath{\mathsf{Vertices}}\xspace}
\newcommand{\Edges}{\ensuremath{\mathsf{Edges}}\xspace}
\newcommand{\Lab}{\ensuremath{\mathsf{Labels}}\xspace}
\newcommand{\Values}{\ensuremath{\mathsf{Values}}\xspace}
\newcommand{\Prop}{\ensuremath{\mathsf{Properties}}\xspace}
\newcommand{\semanticsfull}[2]{[\! [#1]\!]_{#2}}
\newcommand{\semanticsele}[2]{[\! [#1]\!]^{#2}}
\newcommand{\Q}{\mathbb{Q}}
\begin{document}
\title{Answering Constraint Path Queries over Graphs}%Just a mockup for now

%%
%% The "author" command and its associated commands are used to define the authors and their affiliations.
\author{Heyang Li}
\affiliation{%
  \institution{University of Kaiserslautern-Landau}
  %\streetaddress{P.O. Box 1212}
  \city{Kaiserslautern}
  \state{Germany}
}
\email{heyang.li@cs.rptu.de}

\author{Anthony Widjaja Lin}
\orcid{0000-0003-4715-5096}
\affiliation{%
    \institution{University of Kaiserslautern-Landau and MPI-SWS}
  %\streetaddress{1 Th{\o}rv{\"a}ld Circle}
  \city{Kaiserslautern}
  \country{Germany}
}
\email{awlin@mpi-sws.org}

\author{Domagoj Vrgo\v{c}}
%\orcid{0000-0001-5109-3700}
\affiliation{%
  \institution{PUC Chile and IMFD Chile}
  \city{Santiago}
  \country{Chile}
}
\email{vrdomagoj@uc.cl}

% \author{J\"org von \"Arbach}
% \affiliation{%
%   \institution{University of T\"ubingen}
%   \city{T\"ubingen}
%   \country{Germany}
% }
% \email{jaerbach@uni-tuebingen.edu}
% \email{myprivate@email.com}
% \email{second@affiliation.mail}

% \author{Wang Xiu Ying}
% \author{Zhe Zuo}
% \affiliation{%
%   \institution{East China Normal University}
%   \city{Shanghai}
%   \country{China}
% }
% \email{firstname.lastname@ecnu.edu.cn}

% \author{Donald Fauntleroy Duck}
% \affiliation{%
%   \institution{Scientific Writing Academy}
%   \city{Duckburg}
%   \country{Calisota}
% }
% \affiliation{%
%   \institution{Donald's Second Affiliation}
%   \city{City}
%   \country{country}
% }
% \email{donald@swa.edu}

%%
%% The abstract is a short summary of the work to be presented in the
%% article.
\begin{abstract}
    Constraints are powerful declarative constructs that allow users to
    conveniently restrict variable values that potentially range over an
    infinite domain. In this paper, we propose a constraint path query language
    over property graphs,
    which extends Regular Path Queries (RPQs) with SMT constraints on data
    attributes in the form of equality constraints and Linear 
    Real Arithmetic (LRA) constraints. We provide efficient algorithms 
    for evaluating such path queries over property graphs, which exploits 
    optimization of macro-states (among others, using theory-specific 
    techniques). 
    In particular, we demonstrate how such an algorithm may effectively utilize 
    highly optimized SMT solvers for resolving such constraints over paths. 
    We implement our algorithm in MillenniumDB, an open-source graph engine
    supporting property graph queries and GQL. Our extensive empirical
    evaluation in a real-world setting demonstrates the viability of our 
    approach.
\end{abstract}

\maketitle

%%% do not modify the following VLDB block %%
%%% VLDB block start %%%
% \pagestyle{\vldbpagestyle}
% \begingroup\small\noindent\raggedright\textbf{PVLDB Reference Format:}\\
% \vldbauthors. \vldbtitle. PVLDB, \vldbvolume(\vldbissue): \vldbpages, \vldbyear.\\
% \href{https://doi.org/\vldbdoi}{doi:\vldbdoi}
% \endgroup
% \begingroup
% \renewcommand\thefootnote{}\footnote{\noindent
% This work is licensed under the Creative Commons BY-NC-ND 4.0 International License. Visit \url{https://creativecommons.org/licenses/by-nc-nd/4.0/} to view a copy of this license. For any use beyond those covered by this license, obtain permission by emailing \href{mailto:info@vldb.org}{info@vldb.org}. Copyright is held by the owner/author(s). Publication rights licensed to the VLDB Endowment. \\
% \raggedright Proceedings of the VLDB Endowment, Vol. \vldbvolume, No. \vldbissue\ %
% ISSN 2150-8097. \\
% \href{https://doi.org/\vldbdoi}{doi:\vldbdoi} \\
% }\addtocounter{footnote}{-1}\endgroup
%%% VLDB block end %%%

%%% do not modify the following VLDB block %%
%%% VLDB block start %%%
\ifdefempty{\vldbavailabilityurl}{}{
\vspace{.3cm}
\begingroup\small\noindent\raggedright\textbf{Artifact Availability:}\\
The source code, data, and/or other artifacts have been made available at \url{https://github.com/Wanshuiquan/MillenniumDB/tree/artifact}.
\endgroup
}

\section{Introduction}
\label{sec:intro}
Graph databases have been an increasingly popular technology in the database
ecosystem over the past decades, with multiple
open-source~\cite{JenaTDB,kuzu,mdb} and proprietary
systems~\cite{Webber12,Oracle,memgraph,nebula,TigerGraph} being developed, and a
steady stream of research literature on the
subject~\cite{survey,Baeza13,DeutschFGHLLLMM22}. Graphs offer an intuitive
modelling of the application domain with nodes representing entities and edges representing 
connections between these entities. Edges are usually labelled to identify the
type of the connection, giving rise to the edge-labelled graph database model~\cite{CruzMW87,Baeza13}. Extending this model with the ability to add attributes with their associated values to both nodes and edges (and also to label the nodes) is supported in the property-graph data model~\cite{survey}.

While the early work on graph databases focused on edge-labelled graphs, partly due to their widespread use in the Semantic Web community and the availability of standards such as RDF~\cite{RDF} and SPARQL~\cite{HarrisS13}, the main focus of commercial vendors these days are property graphs. Historically the main player in this space has been Neo4j with their Cypher query language~\cite{cypherpaper}, with many vendors implementing their variants Cypher. Many different flavours of the query language also meant low interoperability, so significant amount of efforts was put into standardizing property-graph query languages by the ISO/IEC resulting in SQL/PGQ~\cite{sql-pgq-standard}, and GQL~\cite{gql-standard} standards for querying property graphs.

At their core, both standards share the same pattern matching capabilities, starting from graph patterns~\cite{survey}, which allow finding a small graph-like pattern inside of a larger property graph. Another core feature of all graph query languages are \emph{path queries}, which allow traversing the graph with paths whose length is not know in advance. Traditionally~\cite{HarrisS13,survey,CruzMW87}, the main class of such queries were \emph{regular path queries (RPQs)}, which were specified via a regular expression and would return pairs of nodes connected by a path whose edge labels spell a word in the language of the expression.

    \begin{figure}[h]
        \centering
    \resizebox{0.6\linewidth}{!}{%
        % !TEX root = ../main.tex
% !TEX encoding = UTF-8 Unicode

\begin{tikzpicture}[
    node distance=4.2cm,
    person/.style={circle, draw=black, thick, minimum size=2cm, align=center, font=\Huge},
    arrow/.style={-{Stealth[length=3mm]}, draw=black, thick},
    label/.style={font=\Huge, midway, fill=white, inner sep=1pt, align=center}
]

% Nodes (People)
\node[person] (alice) {Alice \\ age: 25};
\node[person, right=of alice] (bob) {Bob \\ age: 30};
\node[person, below=of alice] (charlie) {Charlie \\ age: 28};
\node[person, right=of charlie] (diana) {Diana \\ age: 32};

% Edges (Relationships) with labels on arrows
% Bidirectional relationships with curved arrows
\draw[arrow, bend left=15] (alice) to node[label, above] {\texttt{follow}, since: 2020} (bob);
\draw[arrow, bend left=15] (bob) to node[label, above]{\texttt{follow}} node[label, below] { since: 2019} (alice);

\draw[arrow, bend left=15] (charlie) to node[label, above] {\texttt{follow}, since: 2021} (diana);
\draw[arrow, bend left=15] (diana) to node[label, below] {\texttt{follow}, since: 2018} (charlie);

% Single direction relationships
\draw[arrow] (alice) to node[label, left, xshift=-3pt] {\texttt{follow},\\ since: 2022} (charlie);

% Favorite relationships with different curvature
\draw[arrow, bend right=15] (bob) to node[label, above, sloped] {\texttt{favorite}} node[label,below, sloped]{since: 2020}(charlie);
\draw[arrow, bend right=15] (diana) to node[label, right, xshift=2pt] {\texttt{favorite}, \\since: 2019} (bob);

% Node labels
\node[above=0.1cm of alice, font=\Huge] {\texttt{Person}};
\node[above=0.1cm of bob, font=\Huge] {\texttt{Person}};
\node[below=0.1cm of charlie, font=\Huge] {\texttt{Person}};
\node[below=0.1cm of diana, font=\Huge] {\texttt{Person}};

\end{tikzpicture}
    }
        \caption{Social network property graph.}
        \Description{A diagram of a property graph representing a social network, with person nodes and follow/favorite edges.}
        \label{fig:property_graph}
    \end{figure}
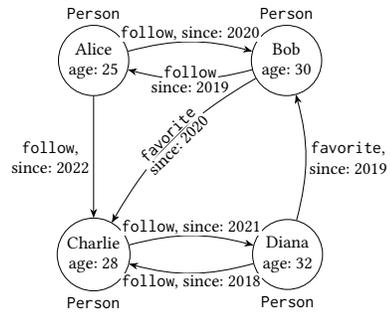

In GQL and SQL/PGQ path queries were significantly enhanced compared to previous efforts, allowing regular path queries to return different sort of paths between two nodes (shortest, simple, etc.), and there is an ongoing effort to include reasoning on complex path properties such as length, maximum values, or cost  into the GQL standard. To illustrate the importance of such features, consider the social network graph in Figure~\ref{fig:property_graph} representing information about people who know each other. Suppose now that we wish to find paths labelled by \texttt{follows} such that the link is rather new (e.g. established after 2021), but also that the maximum difference between the age of two people along this paths is no more than seven years. In our example the path \texttt{Alice}$\rightarrow$\texttt{Charlie}$\rightarrow$\texttt{Diana} is one such path connecting \texttt{Alice} to \texttt{Diana}. While such queries are relevant in practice, not many existing engines can specify them or execute them efficiently, be it because they lack full support for RPQs~\cite{FariasMRV-iswc24}, or because they cannot express complex data properties such as the maximum age gap in the example above~\cite{cypherpaper,LibkinMV-jacm16}.

\paragraph*{Constraints} The aforementioned query is an example of 
\emph{constraint queries} \cite{constraint-databases} specifically applied to 
graph databases. That is, one allows variables over a possibly infinite domain 
(e.g. the set of real numbers), which could be constrained by using formulas 
over certain logical theories (e.g. Linear Real Arithmetic (LRA)). Constraints
were studied systematically in database theory in the 1990s resulting in
several prototypes including DEDALE \cite{dedale}, MLPQ \cite{mlpq}, and DISCO
\cite{disco}, although such systems were limited to databases not exceeding
hundreds of tuples. To the best of our knowledge, none of these systems are 
still maintained and available in the public domain. Recently, constraints have
been revisited in the context of graph databases \cite{FJL22,FLP25,LSY25}. In
particular, a large class of constraint path queries can be answered efficiently
(i.e. in nondeterministic logarithmic space and polynomial time, for any fixed
query). Unfortunately,
all of these algorithms rely on heavy machinery from embedded finite model
theory called \emph{Restricted Quantifier Collapse (RQC)}. In fact, the
proposed algorithms rewrite a given constraint query into a simpler one that is
in the worst case \emph{doubly exponentially larger} than the original query!

\paragraph*{Contributions} In this paper, we demonstrate that it is possible to
enrich RPQs with complex data constraints, without sacrificing efficiency and
scalability of query evaluation. Specifically, our contributions can be 
summarized as follows:
%
%provide a new query evaluation
%algorithm for 
%
%To tackle these deficiencies and to provide a clear framework for specifying and evaluating expressive path queries which allow expressing complex data conditions along paths conforming to a regular expression. 
\begin{itemize}
\item We introduce \emph{parametric regular expressions}, which provide a clean
    syntax to specify path patterns which: (a) conform to a regular expression;
        (b) allow defining complex data constraints on attribute values along 
        such paths. In particular, constraints of the form of (dis)equality
        over strings and existential (in)equality formulas over linear terms over rational 
        variables are permitted.
\item We show that parametric regular expressions can be converted into
    \emph{parametric automata} \cite{seq-theory,FJL22,FL22}, which may be 
        construed
        as a subclass of Regular Data Path Queries \cite{FJL22}, where
        variables are ``read-only'' but they are allowed to take values that are 
        not in the database. 
        %, an automaton model equivalent to parametric regular expressions, which allows us to provide an execution model for our class of queries 
        Among others, this allows an extension of the product-graph 
        construction used to evaluate RPQs~\cite{FariasMRV-iswc24}, yielding
        a constraint reachability problem over property graphs.
    \item We provide new lightweight query evaluation algorithms, which do not
        use heavy machinery from constraint databases and embedded finite model
        theory (in particular, RQC \cite{FJL22,FLP25}).
        The new algorithm is essentially a graph reachability algorithm over
        ``macro-states'' (i.e. a data structure consisting of a node in the
        graph, a state in the automaton, and a set of accumulated constraints).
        In particular, a simplex algorithm (which is supported by most 
        SMT-solvers) can be used to efficiently determine feasibility of
        a macro-state. The algorithm runs in $O(2^c \cdot |A| \cdot (|V| + |E|))$ time, 
        where the query has size $A$ with $c$ (in)equalities constraints,
        and the property graph has $|V|$ vertices and $|E|$ edges. The
        algorithm is exponential only in the size of the query, which is 
        unavoidable owing to our NP-hardness of the problem. 
        Since the query is typically much smaller than the database, 
        we may use \emph{data complexity} \cite{Vardi82} to measure the 
        complexity of the algorithm (i.e., $c$ and $|A|$ as a constant size),
        in which case our algorithm runs in linear time.
        % \anthony{Heyang, please put the complexity?}.
        % In particular, we also proved that the new constraint query language has a polynomial-time combined complexity.
        %We also provide a detailed
        %analysis of our new query evaluation algorithm. 
        %a lightweight on-the-fly 
    \item %We develop a series of algorithms for efficiently evaluating this class of queries and 
        We implement our new query evaluation algorithms inside of 
        MillenniumDB~\cite{mdb}, an open-source graph engine supporting 
        property graph queries and GQL; and
    \item We provide an extensive experimental evaluation showing the 
        feasibility of our approach in a real-world setting, \emph{up to tens of
        millions of tuples}. This is far beyond the database size that previous
        constraint database systems \cite{dedale,disco,mlpq} could handle, i.e.,
        up to hundreds of tuples. On average, our approach can
        evaluate most queries within \SI{100}{ms} for medium dense graphs
        %. On the mean time,
        %our approach is also feasible 
        and complicated queries over extremely dense graphs %, most of which our 
        %approach can evaluate 
        within \SI{1}{s}. This performance is despite the NP-hard combined 
        complexity of the problem.
        %Furthermore, our result is \emph{first system 
        %that is capable of evaluating constraint queries that can scale far 
        %beyond hundreds of tuples}, as was the case in other constraint database
        %systems.

        % that highlight the effects of different optimizations which can be applied to our algorithms.
\end{itemize}

\paragraph*{Related work} While path queries that constrain how data values
change along a path conforming to a regular expression (or extensions thereof
for constraining data) have been studied in the
theoretical literature~\cite{LibkinMV-jacm16,FJL22,BFL15}, to the best of our
knowledge, not much work was done on actually evaluating such queries in
practice. This might not be surprising, given that such data conditions mimic 
aggregation over paths, which is a provably hard problem~\cite{FrancisGGLMMMPR23}. However, such queries are highly relevant in practice, and are being actively added to the current version of the GQL standard~\cite{DeutschFGHLLLMM22}. For this same reason, certain path constraints are supported in existing systems through the \texttt{UNWIND} operator~\cite{Webber12}, which allows to first collect all the values along a path and then subsequently process then as a list, the performance in such cases seems to be somewhat lacking~\cite{GheerbrantLR25}, which is to be expected given that the number of paths matching the underlying regular pattern can easily become exponential~\cite{FariasMRV-iswc24} and the \texttt{UNWIND} approach requires collecting all of them for post-processing. Even when such approaches are efficient, they do not provide a systematic way of expressing path constraints as parameterized regex we introduce do, since the latter can support data constraints under a regular pattern.

\paragraph*{Organization} The remainder of the paper is organized as follows. Preliminary definitions are given in 
Section~\ref{sec:prelim}. We define parametric RPQs and their automaton model in Section~\ref{sec:language}. Algorithms 
for their evaluation are given in Section~\ref{sec:naive-algo} and Section~\ref{sec:algos}. Experimental evaluation is in Section~\ref{sec:exp}.
 We conclude in Section~\ref{sec:concl}.

\section{Preliminaries}
\label{sec:prelim}
Here we define basic notions used throughout the paper.

\paragraph{Graphs and paths}  Following the usual conventions in the research literature
~\cite{DBLP:conf/sigmod/DeutschFGHLLMPSVZ22}, we define \emph{property graphs} as directed graphs where both edges and nodes carry labels and a series of attributes (i.e. properties) with their associated values. Formally, we assume disjoint countably infinite sets \Vertices of vertex identifiers, \Edges of edge identifiers, and \Lab of edge labels. Similarly, we assume a countably infinite set \Prop of node and edge property names and \Values of property values. We can then define property graphs as follows:
%We define our notion of property graphs following~\cite{DBLP:conf/sigmod/DeutschFGHLLMPSVZ22}. Roughly speaking, property graphs are directed graphs, whose vertices and edges are labeled, and can have properties with values. Each property is associated with string/real values and a name. For the sake of simplicity and clarity of presentation, we formalize the set of labels and properties as a \emph{schema}.
%\begin{definition}[Schema]
%    \label{def:schema}
%    A schema is a tuple $(\mathcal{L}, \mathcal{P}, Val)$, where:
%    \begin{itemize}
%        \item $\mathcal{L}$ is a countably infinite set of labels,
%        \item $\mathcal{P}$ is a countably infinite set of property names,
%        \item $Val \subseteq \left(\mathbb{R} \cup String\right)$ is a countably infinite set of property values in real numbers or strings.
%    \end{itemize}
%\end{definition}
\begin{definition}
    \label{def:property_graph}
    A \textit{property graph} is a tuple $G = (V, E, \rho, \lambda, \pi)$ where:
\begin{enumerate*}
    \item $V\subseteq \Vertices$ is a finite set of vertex identifiers;
    \item $E\subseteq \\\Edges$ is a finite set of edge identifiers;
    \item $\rho: E \rightarrow  \left(V \times V \right)$ is a total function mapping edges to 
    ordered pairs of vertices. \heyangchanged{For convenience, instead of writing $\rho\left(e\right) =  \left(v_1, v_2\right)$, 
we shall often write $v_1\xrightarrow{e} v_2$;}
    \item $\lambda : \left(V \cup E \right) \rightarrow \Lab$ is a total function assigning a label to a vertex or an edge; and
    \item $\pi: \left(V \cup E\right) \times \Prop \rightarrow  \Values$ is a partial function mapping an element (edge or vertex) and a property name to a property value.
\end{enumerate*}
\end{definition}
\begin{example}
    \label{ex:property_graph}
    Consider the property graph $G$ depicted in Figure~\ref{fig:property_graph} modeling a social network, where vertices represent people, and edges represent 
    relationships between them. The labels of vertices is \texttt{Person}, while edges are labeled as \texttt{follow} or \texttt{favorite}. Each person vertex has 
    a \texttt{name} property and an \texttt{age} property. Each edge has a \texttt{since} property indicating the year when the relation started.
\end{example}
\heyangchanged{
\begin{definition}[Path]
    \label{def:path}
    A \emph{path} from  $v$ to $w$ in a property graph~$G=(V, E, \rho, \lambda, \pi)$, where $v, w  \in V$, is an alternating sequence $v = v_0 e_1 v_1 e_2, \ldots, e_n v_n = w$ of vertices and edges
    where $n \ge 0$, such that a path can be a unit vertex $v$ where $n = 0$, or for all $i \in [1,n]$,    $e_i \in E$ and $\rho\left(e_i\right) = \left(v_{i-1}, v_{i}\right)$ for \emph{forward edge} or $\rho\left(e_i\right) = \left(v_{i}, v_{i-1}\right)$ for \emph{backward edge}.
\end{definition}

Note: a path can be empty, denoted by $\epsilon$. 
}

\paragraph{Regular path queries} Regular path queries (RPQs for short)~\cite{Baeza13} in a graph database $G$ is an expression of the form $(v, \texttt{regex},?x)$, where $v$ is a node in $G$ and \texttt{regex} is a regular over the alphabet of edge labels, and $?x$ the output variable. The output of an RPQ over $G$, denoted $\semanticsfull{(v, \texttt{regex},?x)}{G}$, is the set of all nodes $v'$ such that $v'$ can be reached from $v$ by a path $P$ in $G$ and the edge labels along this path form a word accepted by the regular expression \texttt{regex}.

\paragraph{The product graph construction} A common way to evaluate RPQs is based on the product graph construction~\cite{CruzMW87,FariasMRV-iswc24}. Given a graph database $G = (V,E,\rho,\lambda,\pi)$ and an RPQ $q = (v,\texttt{regex},\texttt{?x})$, the \emph{product graph} is constructed by first converting the regular expression \texttt{regex} into an equivalent non-deterministic finite automaton $(Q,\Sigma,\delta,q_0,F)$. Here $Q$ is a finite set of states, $\Sigma$ a finite alphabet of edge labels, $\delta \subseteq Q\times \Sigma\times Q$ the transition relation, and the initial  state is $q_0$, while $F$ is the set of final states. The product graph $G_\times$ is defined as the graph database $G_\times = (V_\times, E_\times, \rho_\times, \lambda_\times,\pi_\times)$, where
\begin{enumerate*}
    \item $V_\times = V\times Q$;
    \item $E_\times = \{(e,(q_1,a,q_2)) \in E\times \delta \mid  \lambda(e) = a\}$;
    \item $\rho_\times(e,d) = ((x,q_1),(y,q_2))$ if: $d = (q_1,a,q_2)$, $\lambda(e) = a$ and  $\rho(e)=(x,y)$;
    
    \item $\lambda_\times((e,d)) = \lambda(e)$;
    \item $\pi_\times(v,q) = \pi(v)$;
    \item $\pi_\times(e,d) \\=  \pi(e)$.
\end{enumerate*}
In the final two items we abuse the notation slightly to signal that the set of attributes for nodes or edges is inherited from the original database $G$. Each node of the form $(u,q)$ in $G_\times$ corresponds to the node $u$ in $G$ and, furthermore, each path $P$ of the form $(v,q_0)(e_1,d_1)(v_1,q_1)\dots (e_n,d_n)(v_n,q_n)$ in $G_\times$ corresponds to a path $p = v e_1 v_1 \dots e_n v_n$ in $G$ that (a) has the same length as $P$ and (b) brings the automaton from state $q_0$ to $q_n$.
As such, when $q_n \in F$, then this path in $G$ matches $\texttt{regex}$. In other words, all nodes $v'$ that can be reached from $v$ by a path that matches $\texttt{regex}$ can be found by using standard graph search algorithms (e.g., BFS/DFS) on $G_\times$ starting in the node $(v,q_0)$.

\section{Query Language}
\label{sec:language}
%!TEX root = main.tex
%\anthony{Comparison GQL: GQL cannot do some regular stuff. GQL cannot do general parameters.}

%\heyang{Heyang add} 
This section introduces the syntax and semantics of \emph{parametric regular expressions} and \emph{parametric regular path queries}. Parametric regular 
expressions extend standard regular expressions (used in RPQs and GQL) in two aspects: (i) parametric regular expressions can not only express patterns over edges,
but also express patterns over nodes; and (ii) parametric regular expressions have \emph{constraints} on data domains of edges and nodes, and these constraints can 
query beyond the active domain of a property graph with \emph{global parameters}. In addition to parametric regular expressions, we also introduce \emph{parametric automata},
which serve as an execution model for parametric regular path queries. Finally, we analyze the complexity of evaluating parametric regular path queries over property graphs.
% \anthony{Comparison GQL: GQL cannot do some regular stuff. GQL cannot do
% general parameters.}

\subsection{Parametric Regular Expressions}

\subsubsection*{Syntax} The syntax of parametric regular expressions is based on the syntax of regular expressions in~\cite{mdb}, and the formal definition is given below.

\begin{definition}[Syntax of Parametric Regular Expressions]
\label{def:syntax-para-regex}
We assume a set $\mathcal{L}$ of labels, a finite set $\mathcal{P}_n$ of \emph{numerical} properties, a finite set $\mathcal{P}_c$ of \emph{string} properties 
and a set of \emph{global parameters} $Var$ such that $Var \subset \Q$. The syntax of parametric regular expressions is defined as follows:
\begin{bnfgrammar}
$E$::= $(t, \phi)$: $t \in \mathcal{L}$
|  $\string^E$: inverse
| $E_1 / E_2$: concatenation of $E_1$ and $E_2$ 
| $E_1 \mid E_2$: alternation of $E_1$ and $E_2$
%: an alternative path of $E_1$ and $E_2$. 
| $E^*$
%: matching $E$ by zero or more  
| $E^+$
%: matching $E$ by one or more
| $E^{?}$
%: matching $E$ by zero or one
% | $(E)$
;;

% | $elt$ : \text{the abbreviation of $(elt, true)$, label an edge}
% | $\phi\left(p, attr\right)$: the abbreviation of $(elt_1 \mid \dots \mid elt_n, \phi(p, attr))$

$\phi$::=  $x = c_{str}$: $x \in \mathcal{P}_c$ and $c_{str}$ is a string constant
| $t_{ar} \sim t_{ar}$:$\sim \in \{>, < , \le, \ge, \neq, =\}$
| $\phi \land \phi$
;;
$t_{ar}$::=  $c$: $c \in \mathbb{Q}$
| $c\cdot p$: $c \in \mathbb{Q}$ 
| $t_{ar} + t_{ar}$ 
;;
$p$::= $attr$: $attr \in \mathcal{P}_n$
| $?x$: $?x \in Var$
\end{bnfgrammar}
\end{definition}

For the convenience of presentation, we call $attr \in \mathcal{P}_n$ as \emph{numerical attributes}, and $attr \in \mathcal{P}_c$ as
\emph{string attributes}. 

% Element -- Data constraint -- Sequence of Elments -- pathn  on Graphs 

\subsubsection*{Semantics} 

% The semantics of parametric regular expressions is interpreted on captured paths with respect to a global parameters
% assignments, and is based on the theory of Quantifier-Free Linear Real Arithmetic and the theory of equality (with 
% string constants)~\cite{Bradley-Book} which are supported by major SMT solvers like Z3~\cite{10.1007/978-3-540-78800-3_24}

% A global parameters assignment is a total function that assigns each global parameter to a rational value, and we write $\emptyset$ for empty assignments. 
% Given two assignments $\mathfrak{A}$ and $\mathfrak{A}'$, we say that $\mathfrak{A}$ and $\mathfrak{A'}$ \emph{unify} if $\mathfrak{A}(x) = \mathfrak{A'}(x)$ for 
% all $x \in dom(\mathfrak{A}) \cap dom(\mathfrak{A'})$, and we define their \emph{unification} $(\mathfrak{A} \cup \mathfrak{A'})$ by setting $(\mathfrak{A} \cup 
% \mathfrak{A'})(x) = \mathfrak{A}(x)$ if $x \in dom(\mathfrak{A})$, and otherwise $(\mathfrak{A} \cup \mathfrak{A'})(x) = \mathfrak{A'}(x)$.

% Assume a property graph~\pgraph~with a finite set of properties $\mathsf{P}$, and an object $o \in V \cup E$, and we obtain the properties assignments of $o$, i.e. 
% $\mathfrak{M}_o(p) = \pi(o, p)$ for all $p \in \mathsf{P}$. We denote a global parameters assignment $\mathfrak{A}$ and an object $o$ entails a data constraint $\phi$ as 
% $o, \mathfrak{A} \models \phi$, whenever  $\mathfrak{M}_o \cup \mathfrak{A} \models \phi$. 
%occur at the level of variables inhrt from the theory, attributes and variables as parate,mers , independent from $G$

We start with the \emph{sequence of elements}, which is what can be captured by a parametric regular expression. We assume a countably 
infinite set \Lab~of \emph{forward} labels, \Prop~of properties names and \Values~of properties values. For each forward label $l \in \Lab$, we have a related \emph{inverse label} $\string^l$, and 
the set of inverse labels is denoted as $\string^\Lab$ An element is a pair $(l, f)$, such that $l \in \Lab \cup \string^\Lab$ and $f: \Prop \mapsto \Values$. 
We denote a set of elements as $\Sigma$. 
% which can be the alphabet of a parametric regular expression. 

A \emph{sequence of elements} is a finite list $p$ such that $p \in \Sigma^*$. Given two sequences $p_1, p_2 \in \Sigma^*$, where $p_1 = (l_1, f_1), \dots, (l_n, f_n)$ and 
$p_2 = (l_1', f_1'), \dots, (l_n', f_n')$, the \emph{concatenation} of $p_1$ and $p_2$ is a new sequence $p_1 \cdot p_2 = (l_1, f_1),  \dots, (l_n, f_n), (l_1', f_1'), \dots, 
(l_n', f_n')$.

\paragraph{Interpretation of Data Constraints} 
The data constraints in parametric regular expressions are interpreted in terms of $assignments$ to global parameters and the theory of Quantifier-Free Linear Real Arithmetic and 
the theory of equality (with string constants)~\cite{Bradley-Book} which are supported by major SMT solvers like Z3~\cite{10.1007/978-3-540-78800-3_24}.
In the following, we denote the theory of Quantifier-Free Linear Real Arithmetic and equality as $\mathfrak{T}$.

A global parameters assignment is a total function that assigns each global parameter to a rational value, and we write $\emptyset$ for empty assignments. 
 Given two assignments $\mu$ and $\mu'$, we say that $\mu$ and $\mu'$ are \emph{unifiable} if $\mu(x) = \mu'(x)$ for all $x \in dom(\mu) \cap dom(\mu')$, and
 we define the \emph{unification} $(\mu \cup \mu')$ of two unifiable assignment $\mu, \mu'$ $(\mu \cup \mu')(x) = \mu(x)$ if $x \in dom(\mu)$, and otherwise 
 $(\mu \cup \mu')(x) = \mu'(x)$.

\begin{definition}[Semantics of Data Constraint] 
Given an element $(l, f)$, an assignment to global parameters $\mu$ and a data constraints $\phi(\overline{x}, \overline{p})$ where $\overline{x}$ are global parameters and $\overline{p}$ are properties 
variables. $\mathfrak{T} \models \phi(\mu(\overline{x}), f(\overline{p}))$ is defined inductively over $\phi$ as follows:
\begin{itemize}
    \item $\mathfrak{T} \models \theta_1 \land \theta_2$ iff $\mathfrak{T} \models \theta_1$ and $\mathfrak{T} \models \theta_2$.
    \item $\mathfrak{T} \models f(p) = c_{str}$ iff $p$ is a string property variable and $f(p) = c_{str}$.
    \item $\mathfrak{T} \models \theta(\mu(\overline{x}), f(\overline{p}))$ iff $\overline{p}$ are numeric properties variables and $\theta(\overline{x}, \overline{p})$ is an (in)equality over linear terms over rational variables.
\end{itemize}
\end{definition}

\begin{definition}[Language of Parametric Regular Expression]
    \label{def:lanuguage-pregex}
 The \emph{language} of a parametric regular expression $r$ with an assignment $\mu$ is a set of sequences over $\Sigma$, %where $\Sigma$ is a set of elements,
, denoted as $\semanticsele{r}{\mu} \subseteq \Sigma^*$. $\semanticsele{r}{\mu}$ is defined inductively as follows:
%  and the based case is following:

% If we denote the theory of Quantifier-Free Linear Real Arithmetic and the theory of equality as $\mathfrak{T}$, global parameters as $\overline{x}$ and 
% their values under assignments as $\mu(\overline{x})$, and $\overline{p} \in \Prop$ and values of properties as $f(\overline{x})$
% , then the formal definition of $\semanticsele{r}{\mu}$ is defined inductively as following:
\begin{align*}
    \semanticsele{\epsilon}{\emptyset} &= \{\epsilon\} \\
  \semanticsele{(l, \phi(\overline{x}, \overline{p}))}{\mu} &= 
  \left\{
    (l,f) 
    \,\middle|\,
    \begin{array}{@{}l@{}}
        \mathfrak{T} \models \phi(\mu(\overline{x}), f(\overline{p}))
    \end{array}
  \right\}  \\ 
   \semanticsele{\string^r}{\mu} &=  \left\{
    (\string^l, f)
   \,\middle|\,   
    \begin{array}{@{}l@{}}
        (l, f) \in \semanticsele{r}{\mu}
    \end{array}
  \right\} \\
  \semanticsele{r_1 / r_2}{\mu_1\cup \mu_2} &= \left\{(  a_1 \cdot a_2) \;\middle|\;  a_1 \in \semanticsele{r_1}{\mu_1}, a_2 \in \semanticsele{r_2}{\mu_2}
  \right\}  \\
  \semanticsele{r_1 \mid r_2}{\mu_1\cup \mu_2} &= \left\{ a \;\middle|\; 
  \begin{array}{@{}l@{}}
 a \in \semanticsele{r_1}{\mu_1} \cup \semanticsele{r_2}{\mu_2}
  \end{array}
  \right\} 
%    \\
% \semanticsele{r^*}{\mu} &= \semanticsele{r}{\mu} \cup \{\epsilon\}
\end{align*}
Moreover, assuming that $r^1 = r$  and $r^{n+1} = r^n /r$ for every $n \ge 1$, we have:
\begin{align*}
    \semanticsele{r^*}{\emptyset \cup \mu_1 \cup \cdots  \mu_i} =  \semanticsele{\epsilon}{\emptyset} \cup \bigcup_{n \ge 1} \semanticsele{r^n}{\mu_i}
\end{align*}
where each $\mu_i, \mu_j, 0 \le i,j \le n$ are pairwise unifiable.

The definition of $r^+$ can be derived from $r/r^*$, and $r^?$ can be derived from $r \mid \epsilon$. 
\end{definition}

\begin{definition}[Membership of Parametric Regular Expressions]
    \label{def:membership}
    We say that $r$ accepts an element sequence $e$ with an assignment $\mu$, as long as $e \in \semanticsele{r}{\mu}$.
    The acceptance is denoted by $p \in \mathcal{L}(r)$
\end{definition}

\paragraph{Semantics over Property Graphs} The semantics of parametric regular expressions is coherent with the semantics of path patterns over property graphs~\cite{mdb}.
We take each path $p$ over a property graph~\pgraph~as a sequence of elements $(l_1, \pi'_1), \dots, \\(l_n, \pi'_n)$,  where for each $p_i \in E \cup V, l_i = \lambda(p_i)$, 
and $\pi': \Prop \mapsto \Values$ is a partial function such that $\pi'(a) = \pi(p_i, a)$. Note the inverse-labeled element in a path $p$ are interpreted as nodes or 
inverse edges over $G$, i.e. if $p_i \in p$ such that $p_i = (l_i, \pi'_i)$ and $l_i = \string^l$, then either $p_i \in V$ and $l = \lambda(p_i)$, or $p_i = \rho(p_{i+1}, p_{i-1})$.

The \emph{set of answers} of a parametric regular expression $r$ over a property graph~\pgraph, denoted as $\semanticsfull{r}{G}$ is a set of paths,
such that each path $p \in \mathcal{L}(r)$.
% The semantics of parametric regular expressions is coherent with the semantics of path patterns over property graphs~\cite{mdb}. It is 
% straightforward to verify for the concatenation operator and the alternation operator. For the inverse operator, assume a parametric regular expression $r$, and a path $p = v_0 \xrightarrow{e_1} v_1 \xrightarrow{e_2} v_2 \ldots \xrightarrow{e_n} v_n$ 
% in a property graph~\pgraph, such that $p \in \semanticsfull{r}{G}$, and the inverse of $p$ according to Definition~\ref{def:lanuguage-pregex} is $p' = v_n \xrightarrow{e_n} v_{n-1} \ldots 
% \xrightarrow{e_2} v_1 \xrightarrow{e_1} v_0$, such that $p' \in \mathcal{L}(\string^r)$, and $p'$ is an inverse path of $p$ over $G$, which means $p' \in \semanticsfull{\string^r}{G}$.

\subsection{Parametric Regular Path Queries}
% \begin{definition}[Set of Answers]
%     \label{def:membership}
%     Fix a property graph $G$ and a parametric regular path $r$, the \emph{set of answers} of $r$ in $G$ is a set of paths in $G$, denoted as $\semanticsfull{r}{G}$, such that 
%     $p \in \mathcal{L}(r)$ 
  
% \end{definition}

\begin{definition}[Parametric Path Regular Queries]
\label{def:pquery}
A parametric regular path query ($\mathrm{PRPQ}$ for short) over a property graph~\pgraph\ is an expression of the form $q = (v, pregex)$, with $v\in V$, 
and $pregex$ a \emph{parametric regular expression}. The query $q$ over a property graph~\pgraph\ returns \texttt{true} if there exists a node $v'$, 
a path $p$ from $v$ to  $v'$, such that $p \in \semanticsfull{pregex}{G}$. Otherwise the query returns \texttt{false}.
%The output of $q$ over a property graph~\pgraph\ is the set of all paths $p$ from $v$ to a node $?x \in V$ and a model $\mathfrak{A}$ of global parameters, such that $p \in \mathcal{L}(pregex)$.
\end{definition}

\begin{example}
    Now we give an example of parametric regular path query. Let us consider the property graph in example~\ref{ex:property_graph}, and assume we want to check whether there is a path using \texttt{follow} 
    edges, such that \emph{the distance between the maximal and the minimal age is smaller than 7, and all \texttt{follow} connections should start in 2019}. This can be formalized as the following 
    parametric regular expression $r_f$:
    \label{ex:prpq}
    \begin{align*}
        &((\texttt{Person}, ?p \le age \land ?q \ge age \land ?q - ?p \le 7)/\\&(\texttt{follow}, since > 2019 ))^*/\\
        &(\texttt{Person}, ?p \le age \land ?q \ge age \land ?q - ?p \le 7)
    \end{align*}
    
    If we denote the node representing `Alice' as $v_0$, and we can formalize the above problem as $\mathrm{PRPQ}(v_0, r_f)$. 
\end{example}

\subsection{Parametric Automaton} 

% In order to define our query language, we define the notion of \emph{parametric automata} following~\cite{SequnenceTheories2023}.
We propose \emph{parametric automaton} as our execution model inspired by
sequence theory~\cite{seq-theory} which decide the constraints over a sequence of objects 
with data domains. Parametric automata take an input path as a sequence of objects with attributes, capture object patterns and evaluate data conditions and find a model 
for global parameters. We will show that parametric automata have the same computation power as parametric regular expressions.

\begin{definition}[Parametric Automaton]
    \label{def:automaton}
% Assume we have a finite set of label $\mathcal{L}$, a finite set of properties $\mathcal{P}$. 
A parametric automaton is a tuple $(\mathsf{L}, \chi, Attr, Q, Q_0, F, \Delta)$, where: 
\begin{enumerate*}
    \item $\mathsf{L} \subseteq \Lab$ is a set of symbols
    \item $\chi$ is a set of global variables.
    \item $ Attr \subseteq \Prop$ is a set of properties.
    \item $Q$ is a finite set of states 
    \item $Q_0 \subseteq Q$ is a set of start states. 
    \item $F \subseteq Q$ is a set of final states. 
    \item We define the transitions $\Delta$ as follows:  
    $\Delta \subseteq Q \times (\Sigma \times T(\chi, Attr) \times \mathbb{B}) \times Q  $ where $T(\chi, Attr)$ is the set of string and linear arithmetic formulas over
    $\chi \cup Attr$ following the format of Definition~\ref{def:syntax-para-regex}, and $\mathbb{B}$ is a boolean value that indicates the transition is inverse if it is true and vice versa.

    If $ (q, (\sigma, \phi, inv) ,q') \in \Delta$, where $\sigma \in \mathsf{L}, \phi \in T(\chi, Attr)$,  then we write $q \xrightarrow{\left( \sigma, \phi, inv \right)} q'$. 
\end{enumerate*}
\end{definition}

% Given a property graph $G = (V, E, \rho, \lambda, \pi)$ and a path in it
% \[
%   v_0\xrightarrow{e_1} v_1 \xrightarrow{e_2} v_2 \ldots \xrightarrow{e_n} v_n
% \]

% Let the path as a sequence, i.e. 
% \[p \coloneq (v_0, e_1, v_1, e_2, v_2, \ldots, e_n, v_n)
% \] 

% $p_i$ refers to the $i$-th element of the sequence.

\begin{definition}[Acceptance Conditions of Parametric Automaton]
    \label{def:accept}
% Assume a countably infinite set \Lab of labels, \Prop of properties names and \Values of properties values, and a set of \emph{element} $\Sigma$ containing
% a set of pairs $(l, f)$ such that $l \in \Lab$ and $f: \Prop \mapsto \Values$.

A parametric automaton $Aut = (\mathsf{L}, \chi, Attr, Q, Q_0, F, \Delta)$.
    $Aut$ \emph{accepts} an element sequence $(l_1, f_1), \dots, (l_n, f_n) \in \Sigma^*$ with respect to 
an assignment $\mu$, if there is a sequence of transitions 
\[
q_0 \xrightarrow{\left( \sigma_1, \phi_1, inv_1 \right)} q_1 \xrightarrow{\left( \sigma_2, \phi_2, inv_2 \right)} q_2 \ldots \xrightarrow{\left( \sigma_n, \phi_n, inv_n \right)} q_n
\] 
such that: \begin{enumerate*}
\item  $q_0 \in Q_0, q_n \in F$ 
\item   $\forall i \in \{1, \dots, n \}, \sigma_i = l_i$;
\item  $\forall i \in \{1, \\ \dots, n \}, \mathfrak{T} \models \phi_i(\mu(\overline{x}), f_i(\overline{p}))$.
\item if $inv_i$ is true, $l_i$ should be an \emph{inverse} label, and otherwise $l_i$ should be a \emph{forward} label.
\end{enumerate*}
% (1) $q_0 \in Q_0, q_n \in F$ (2)  $\forall i \in \{1, \dots, n \},  = \sigma_i = l_i$;
% (3)   $\forall i \in \{1, \dots, m \}, \mathfrak{T} \models \phi_i(\mu(\overline{x}), f_i(\overline{p}))$.
% (4) if $inv_i$ is true, then the next transition matches with $(l_{i-1}, f_{i-1})$, and otherwise matches with $(l_{i+1}, f_{i+1})$
% $p_{i - 1} \xrightarrow{p_i} p_{i + 1}$.  
\end{definition}

% We call a path $p$ \emph{belongs to} $\mathcal{L}(regex)$ with model $\mathfrak{A}$, denoted by $p \in \mathcal{L}(regex)$, as soon as the parametric automaton $\mathcal{A}$ translated 
% from $regex$ accept $p$ with model $\mathfrak{A}$ of global parameters. 

\begin{example}
    The parametric regular expression in example~\ref{ex:prpq} can be formalized as the 
    following parametric automaton $Aut$ depicted in figure~\ref{fig:automaton-example}.
  \begin{figure}[htbp]
    \centering
 \resizebox{0.6\linewidth}{!}{%
        % !TEX root = ../main.tex
% !TEX encoding = UTF-8 Unicode

\begin{tikzpicture}[
    ->, >=Stealth,
    node distance=2.5cm,
    every state/.style={thick, fill=gray!10, minimum size=0.8cm},
    initial text={}
]

% Minimal DFA - only showing valid transitions
\node[state, initial] (q0) {$q_0$};
\node[state, right of=q0] (q1) {$q_1$};
\node[state, right of=q1] (q2) {$q_2$};

\node[state, below of=q0] (q3) {$q_3$};

\node[state, accepting, below of=q1] (qf) {$q_f$};

% Only valid transitions shown
\draw (q0) edge node[above] {\texttt{Person}} node[below]{$\varphi$} (q1);
\draw (q1) edge node[above] {\texttt{follow}} node[below]{$since > 2019$} (q2);
\draw (q2) edge[bend left=30] node[left] {\texttt{Person}} node[below]{$\varphi$} (qf);
\draw(qf) edge[bend left=30] node[above] {\texttt{follow}} node[below]{$since > 2018$} (q3);
\draw(q3) edge[bend left=30] node[above] {\texttt{Person}} node[below]{$\varphi$} (qf);
\end{tikzpicture}
    }    \caption{Parametric automaton example, where {$\varphi := ?p \le age \land ?q \ge age \land ?q - ?p \le 7 $};}
\label{fig:automaton-example}
  \end{figure}
\end{example}

Each parametric regular expression can be converted to a parametric automata. The following conversion is a variant of 
a standard regular expression to NFA conversion. For completeness, we present the conversion. 

\begin{theorem}
    For each parametric regular expressions $r$,  there is a parametric automaton \paut with a single initial state such that
    for each $p \in \semanticsele{r}{\mu}$, $r$ is accepted by $\mathcal{A}$ with assignment $\mu$.
\end{theorem}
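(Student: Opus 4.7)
The plan is to proceed by structural induction on the parametric regular expression $r$, following a variant of Thompson's classical regex-to-NFA construction in which the transition constraint $\phi$ and the inverse flag are treated as passive annotations propagated from the syntax into the transition relation $\Delta$. For every subexpression $r'$ the induction will produce an automaton $\mathcal{A}_{r'}$ with a distinguished initial state and a distinguished final state; once the top-level automaton has been built, the single-initial-state requirement is then a trivial reshaping.

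First I would handle the base cases. For an atomic expression $(l, \phi)$ I would build a two-state automaton with the single transition $q_0 \xrightarrow{(l, \phi, \mathrm{false})} q_f$; for $\string^r$ I would build $\mathcal{A}_r$ inductively and flip the inverse flag (and swap $l$ with $\string^l$) on every transition, matching the clause for $\semanticsele{\string^r}{\mu}$ in Definition~\ref{def:lanuguage-pregex}. For the inductive cases, concatenation $r_1 / r_2$ is handled by identifying the final state of $\mathcal{A}_{r_1}$ with the initial state of $\mathcal{A}_{r_2}$; alternation $r_1 \mid r_2$ is handled by taking the disjoint union and then either merging the two initial states or using an auxiliary $\epsilon$-transition; Kleene $r^*$ is handled by adding a back edge from the final state of $\mathcal{A}_r$ to its initial state together with a bypass branch that accepts $\epsilon$. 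Since the parametric automaton of Definition~\ref{def:automaton} does not admit $\epsilon$-transitions, the construction is completed by the standard $\epsilon$-closure elimination, which is orthogonal to the constraints and inverse flags and preserves acceptance.

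The main technical point, beyond the classical proof, is the treatment of global parameters. Acceptance in $\mathcal{A}_r$ requires a \emph{single} assignment $\mu$ such that $\mathfrak{T} \models \phi_i(\mu(\overline{x}), f_i(\overline{p}))$ for every transition of the accepting run, whereas Definition~\ref{def:lanuguage-pregex} distributes assignments $\mu_1, \ldots, \mu_k$ across subexpressions and only requires them to be pairwise unifiable. The correctness of each inductive step then rests on the elementary observation that pairwise unifiable $\mu_1, \ldots, \mu_k$ can be combined into a single $\mu = \bigcup_i \mu_i$ agreeing with each $\mu_i$ on its domain, and conversely any such $\mu$ restricts to unifiable $\mu_i$'s; hence a derivation witnessing $p \in \semanticsele{r}{\mu_1 \cup \cdots \cup \mu_k}$ under the regex semantics corresponds exactly to an accepting run of $\mathcal{A}_r$ on $p$ under $\mu = \bigcup_i \mu_i$. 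In both directions the constraints encountered are syntactically the same formulas appearing at the leaves of the derivation tree and on the transitions of the run.

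I expect the main obstacle to be bookkeeping rather than any genuine difficulty: one must carefully align inverse labels introduced by nested $\string^r$ operators with the boolean flag on transitions (so that double inversion cancels and mixed forward/inverse labels inside concatenations stay consistent), and one must be explicit that removing the auxiliary $\epsilon$-transitions used in the alternation and Kleene cases preserves both the symbol sequence and the multiset of constraints along every accepting run. Once these points are verified, the final statement -- that $\mathcal{A}_r$ accepts $p$ under $\mu$ whenever $p \in \semanticsele{r}{\mu}$ (and the converse) -- follows by a direct induction, and the single-initial-state condition is obtained by adding a fresh initial state with outgoing transitions replicating those of each original initial state, or simply by the construction itself since the inductive scheme already yields a unique initial state at every level.
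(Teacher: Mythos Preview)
Your proposal is correct and follows essentially the same structural-induction scheme as the paper: atomic expressions become two-state automata, concatenation glues the final of the first to the initial of the second, alternation merges initial states, Kleene star loops final back to initial, and inverse flips the boolean flag; the handling of global parameters via unifiable assignments is also exactly the point the paper relies on (though the paper leaves it more implicit than you do).

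The only notable technical difference is in how $\epsilon$-transitions are treated. You follow Thompson's construction, introducing auxiliary $\epsilon$-moves for alternation and the Kleene bypass and then appealing to a separate $\epsilon$-closure elimination step. The paper instead uses a Glushkov-style redirection that never introduces $\epsilon$-moves at all: for $E_1/E_2$ it reroutes every transition entering a final state of $Aut_1$ so that it enters $q_{02}$ instead; for $E^*$ it reroutes every transition entering a final state back to $q_0$ and declares $q_0$ final; for $E_1\mid E_2$ it creates a fresh $q_0'$ and copies the outgoing transitions of both old initial states onto it. Your approach is cleaner to state and lets you invoke well-known lemmas off the shelf; the paper's approach avoids the extra elimination pass and keeps the automaton small throughout, which matters slightly for the downstream product-graph algorithm. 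Either route is fine for proving the theorem.
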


\begin{proof}
    We prove by structural induction on parametric regular expressions.
    \paragraph{Base case:}
    For atomic expression $E =  (t, \phi)$, the parametric automaton of  $E$ is $Aut = (\mathsf{L}, \chi, Attr, Q, \{q_0\}, F, \Delta)$, where: 
\begin{enumerate*} 
    \item $\mathsf{L} = \{t\}$ 
    \item $\chi$ is all the global parameters in $\phi$.
    \item $Attr$ is all the properties in $\phi$.
    \item $Q = \{q_0, q_f\}$ 
    \item $\{q_0\}$ is the only start state. 
    \item $F = \{q_f\}$ is a set of final states. 
    \item $\Delta = \{(q_0, (t, \phi, false) ,q_f)\}$, 
\end{enumerate*}
    
We can verify each $p \in \semanticsele{(t, \phi)}{\mu}$ accepted by $Aut$ according to Definition~\ref{def:lanuguage-pregex}.

\paragraph{Induction Step}:
For a concatenation expression $E_1 / E_2$, if $E_1$ is translated to  $Aut_1 = (\mathsf{L}_1, \chi_1, Attr_1, Q_1, \{q_{01}\}, F_1, \Delta_1)$, 
and $E_2$ is translated to $Aut_2 = (\mathsf{L}_2, \chi_2, Attr_2, Q_2, \{q_{02}\}, F_2, \Delta_2) $, we construct a parametric automaton 
$Aut' = (\Sigma, \chi, Attr, Q, \{q_01\}, F', \Delta')$, with  $F = F_2$.  we redirect all transition towards states in $F1$ to $q_{02}$. Formally
    \begin{align*}
        \Delta' &= \left(\Delta_1 \setminus \{(q, (t, \phi, inv), q_f), \mid q_f \in F_1 \}\right) \cup \Delta_2 \\
               &\cup \{ (q, (t, \phi, inv), q_{02}) \mid \forall (q, (t, \phi, inv), q_1) \in \Delta_1 , q_1 \in F_1\}
    \end{align*}
For each $p' \in \semanticsele{E_1 \cdot E_2}{\mu}$, we have $p' = p_1 \cdot p_2$ and $\mu' = \mu_1 \cup \mu_2$ where
$p_1 \in \semanticsele{E_1}{\mu_1}$ and $ p_2 \in \semanticsele{E_2}{\mu_2}$. By induction hypothesis, $p_1$ is accepted by $Aut_1$ with 
$\mu_1$, and $p_2$ is accepted by $Aut_2$ with $\mu_2$. Then $e$ has an accept run with the new transition in $\Delta'$ and the unified 
assignment $\mu'$.
% \begin{itemize}
%     \item $\Sigma = \Sigma_1 \cup \Sigma_2$, $\chi = \chi_1 \cup \chi_2$ , $Attr = Attr_1 \cup Attr_2$,  $Q = Q_1 \cup Q_2$ ,  
%     \item Intuitively, we redirect all transition towards states in $F1$ to $q_{02}$. Formally
%     \begin{align*}
%         \Delta' &= \left(\Delta_1 \setminus \{(q, (t, \phi, inv), q_f), \mid q_f \in F_1 \}\right) \cup \Delta_2 \\
%                &\cup \{ (q, (t, \phi, inv), q_{02}) \mid \forall (q, (t, \phi, inv), q_1) \in \Delta_1 ,\\&~\text{such that}~q_1 \in F_1\}
%     \end{align*}
% \end{itemize}

For an alternation expression $E_1 \mid E_2$, $E_1$ is translated to an automaton 
$Aut_1 = (\mathsf{L}_1, \chi_1, Attr_1, Q_1, \{q_{01} \}, F_1, \Delta_1) $, and $E_2$ is translated to an automaton $Aut_2 = (\mathsf{L}_2, \chi_2, Attr_2, Q_2, \{q_{02} \}, F_2, 
\Delta_2) $.  We construct $Aut' = (\Sigma, \chi, Q, \Sigma, {q_0'}, F', \Delta')$ by introducing a new initial state $q_0$ and discarding $q_{01}$ and $q_{02}$. The final 
states set is $F' = F_1 \cup F_2$, and let all transitions from $q_{01}$ and $q_{02}$ start from $q_0'$, formally 
     \begin{align*}
           \Delta' &= \left(\Delta_1 \setminus \{(q_{01}, (t, \phi, inv), q) \mid q \in Q_1\}\right) \\
           &\cup \left(\Delta_2 \setminus \{(q_{02},(t, \phi, inv), q) \mid q  \in Q_2\}\right) \\
           &\cup \{(q_0', (t, \phi, inv), q) \mid (q_{01}, (t, \phi, inv), q) \in \Delta_1\} \\
           &\cup \{ (q_0', (t, \phi, inv), q) \mid (q_{02}, (t, \phi, inv), q) \in \Delta_2\}
        \end{align*}

For each $p' \in \semanticsele{E_1 \mid E_2}{\mu}$, we have $p' = p_1$ or $p' = p_2$ and $\mu' = \mu_1 \cup \mu_2$ where
$ p_1 \in \semanticsele{E_1}{\mu}$ and $ p_2 \in \semanticsele{E_2}{\mu}$. By induction hypothesis, $p_1$ is accepted by $Aut_1$ with 
$\mu_1$, and $p_2$ is accepted by $Aut_2$ with $\mu_2$. Then $p'$ has an accept run with the new transition from $q'_{0}$ with $\mu'$.
% \begin{itemize}
%      \item $\Sigma = \Sigma_1 \cup \Sigma_2$
%      \item $\chi = \chi_1 \cup \chi_2$
%      \item $Attr = Attr_1 \cup Attr_2$
%      \item $Q = (Q_1\setminus \{q_{01}\}) \cup (Q_2 \setminus \{q_{02}\}) \cup \{q_0'\}$ 
%      \item $\{q_0'\}$ is the new initial state. 
%      \item $F' = F_1 \cup F_2$  
%      \item Intuitively, we introduce a new start state $q_0'$, and let all transitions from $q_{01}$ and $q_{02}$ start from $q_0'$,
%      Formally 
%      \begin{align*}
%            \Delta' &= \left(\Delta_1 \setminus \{(q_{01}, (t, \phi, inv), q) \mid q \in Q_1\}\right) \\
%            &\cup \left(\Delta_2 \setminus \{(q_{02},(t, \phi, inv), q) \mid q  \in Q_2\}\right) \\
%            &\cup \{(q_0', (t, \phi, inv), q) \mid (q_{01}, (t, \phi, inv), q) \in \Delta_1\} \\
%            &\cup \{ (q_0', (t, \phi, inv), q) \mid (q_{02}, (t, \phi, inv), q) \in \Delta_2\}
%         \end{align*}
   
% \end{itemize}

For a Kleene star expression $E^*$, if $E$ is translated to $Aut = (\mathsf{L}, \chi , Attr, Q, \{q_{0}\}, F, \Delta) $,  the parametric automaton of 
$E^*$ is $Aut' = (\mathsf{L}, \chi, Attr, Q', \{q_0\}, F', \Delta')$ with $Q' = (Q \setminus F)$ and  $F' = \{q_0\}$, and  we redirect all transitions 
in $\Delta_f$ to $q_0$, i.e.
    \begin{align*}
        \Delta' &= \left(\Delta \setminus \Delta_f\right) \\
               &\cup \{(q, (t, \phi, inv), q_0) \mid (q, (t, \phi, inv), q_f) \in \Delta_f\} 
    \end{align*}

For each $p' \in \semanticsele{E^*}{\mu}$, we have $p' \in \semanticsele{E^i}{\mu_i}$ where $i > 0$ or $p' = \epsilon$ with empty assignment.
For the $\epsilon$ case, since the initial state is also the final state, then $\epsilon$ is accepted without any conditions. 
If we have $p' \in \semanticsele{E^i}{\mu_i}$,  by induction hypothesis, $p'$ is accepted by $Aut$ with $\mu_i$, and then $p$ has an accept run towards
$q_{0}$ with assignment $\mu$.

% \begin{itemize}
%     \item $Q' = (Q \setminus F)$
%     \item $F' = \{q_0\}$  
%     \item Let all transitions toward $F$ be $\Delta_f$, i.e. $\Delta_f = \{(q, (t, \phi, inv), \\q_f) \mid q_f \in F\}$ 
%     Then we redirect all transitions in $\Delta_f$ to $q_0$, i.e.
%     \begin{align*}
%         \Delta' &= \left(\Delta \setminus \Delta_f\right) \\
%                &\cup \{(q, (t, \phi, inv), q_0) \mid (q, (t, \phi, inv), q_f) \in \Delta_f\} 
%     \end{align*}
% \end{itemize}

% For a plus expression $E^+$, $E^+$ is defined as $E^* / E$, and the automaton of $E^+$ is translated by using the construction of $E^*$, $E$ and concatenation.

% For an optional expression $E^?$, and $E$ is translated to  $Aut = (\Sigma,\chi, Attr,Q,  \{q_{0} \}, F, \Delta) $, it is sufficed to modify the final states set $F$, 
% such that $F' = F \cup \{q_0\}$.

For an inverse expression $\string^E$, and $E$ can be translated to $Aut = (\mathsf{L} , \chi,Attr, Q, \{q_{0}\}, F, \Delta) $, it is suffices to modify 
all transitions of $\Delta$, such that $\Delta' =  \{(q, (t, \phi, !inv), q') \mid (q', (t, \phi, inv), q) \in \Delta\}$. For each $p \in \semanticsele{\string^E}{\mu}$,
according to Definition~\ref{def:lanuguage-pregex}, $p$ satisfies the condition of inverse transitions.

Since $?$ and $+$ are derivable from $\mid$ and $*$, it is enough to show the above basic operators. 
\end{proof}

% As we add the conditions of forming a path in definition~\ref{def:path}, we have the following theorem:
% \begin{theorem}
%     Fix a property graph \pgraph. For each parametric regular expressions $r$ well-formed in $G$, there is a parametric automaton \paut~with one initial state, 
%     such that each $p \in L(r)$ is also accepted by $\mathcal{A}$.
% \end{theorem}

\subsection{The Hardness of Parametric Regular Path Queries}
Although the data complexity (i.e. the complexity of evaluating a fixed \query~query over a property graph $G$) is $\mathsf{NL}$-complete \cite{FJL22,Sipser-book},
the combined complexity (where both the query and $G$ are part of the input) is $\mathsf{NP}$-hard \emph{in general}. This can be shown by a reduction from the 
Boolean satisfiability problem (3-SAT).

\begin{theorem}
The combined complexity of parametric regular path queries over a property graph is $\mathsf{NP}$-hard.
\label{theorem:np_hardness}
\end{theorem}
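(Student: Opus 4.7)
The plan is a polynomial-time many-one reduction from 3-SAT to the evaluation of parametric regular path queries. Given a 3-CNF formula $\phi = C_1 \wedge \cdots \wedge C_m$ over Boolean variables $x_1, \ldots, x_n$, I will produce a property graph $G$, a starting vertex $v$, and a parametric regular expression $r$ such that $\mathrm{PRPQ}(v, r)$ over $G$ returns \texttt{true} if and only if $\phi$ is satisfiable. The underlying idea is to use the global parameters of a parametric regular expression as the propositional variables, to use alternation to enumerate which literal of each clause is responsible for satisfying it, and to exploit the fact that unification of sub-assignments across a concatenation forces a single global assignment to satisfy all chosen literals simultaneously.

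For the construction, let $G$ consist of a single vertex $v$ with label $\texttt{N}$ and a single self-loop edge $e$ with label $\texttt{E}$, neither carrying any properties. For each variable $x_i$ introduce a global parameter $?x_i \in Var$, and associate with each literal the atomic constraint $c(x_i) := (?x_i \geq 1)$ and $c(\neg x_i) := (?x_i \leq 0)$; these two constraints are jointly unsatisfiable over $\mathbb{Q}$. Encode each clause $C_j = \ell_{j,1} \vee \ell_{j,2} \vee \ell_{j,3}$ as
\[
    E_j \ := \ (\texttt{E}, c(\ell_{j,1})) \ \mid \ (\texttt{E}, c(\ell_{j,2})) \ \mid \ (\texttt{E}, c(\ell_{j,3})),
\]
and take as the overall regex
\[
    r \ := \ (\texttt{N}, \mathit{tt}) \, / \, E_1 \, / \, (\texttt{N}, \mathit{tt}) \, / \, E_2 \, / \, \cdots \, / \, E_m \, / \, (\texttt{N}, \mathit{tt}),
\]
where $\mathit{tt}$ abbreviates any tautological atomic constraint such as $0 = 0$. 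Because the self-loop is the only edge, the only element sequences over $G$ that structurally match $r$ are those corresponding to the path $v(ev)^m$ of length $2m+1$, and the construction is clearly polynomial in $n + m$.

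For correctness I would argue that such a path belongs to $\semanticsfull{r}{G}$ iff some global assignment $\mu$ makes $r$ accept it, iff for every clause $C_j$ at least one of the three atomic constraints in $E_j$ is satisfied by $\mu$. Reading $\mu$ back as a Boolean truth assignment via $x_i \mapsto \texttt{true}$ when $\mu(?x_i) \geq 1$ and $x_i \mapsto \texttt{false}$ otherwise recovers a satisfying assignment of $\phi$; conversely, any satisfying assignment of $\phi$ yields a consistent $\mu$ by setting $\mu(?x_i) = 1$ (resp.\ $0$) when $x_i$ is true (resp.\ false), together with a valid choice of disjunct in each $E_j$. The step that needs the most care, and the main obstacle, is the semantic bookkeeping of the global parameter assignment through the inductive semantics of Definition~\ref{def:lanuguage-pregex}: one has to verify precisely that concatenation forces the sub-assignments chosen for the various clause gadgets to unify into a single $\mu$, whereas alternation commits only to the sub-assignment of the branch actually selected, so that $r$ faithfully mirrors the Boolean interplay of conjunction and disjunction in $\phi$ rather than allowing independent, inconsistent choices per clause.
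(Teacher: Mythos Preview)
Your proposal is correct and follows essentially the same approach as the paper: a reduction from 3-SAT via a single-vertex, single-self-loop graph, using global parameters as Boolean variables, alternation to encode each clause, and concatenation of the clause gadgets to enforce a single unified assignment. The only differences are cosmetic---you place the literal constraints on edge atoms rather than vertex atoms and encode truth by $?x_i \geq 1$ / $?x_i \leq 0$ instead of the paper's $?x_i \neq 0$ / $?x_i = 0$---and your discussion of the unification bookkeeping under the inductive semantics is, if anything, slightly more careful than the paper's.
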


\begin{proof}
We reduce from the Boolean satisfiability problem (3-SAT). Given a Boolean formula $\phi$ in CNF with variables $x_1, x_2, \ldots, x_n$ and clauses $C_1, C_2, \ldots, C_m$, 
we construct a property graph $G$ and a \query~query such that $\phi$ is satisfiable if and only if there exists a path in $G$. 

We construct $G$ with a single vertex $v$ 
labeled by $v$ with a single property $a$ such that $\pi(v, a) = 1$, and a single self loop edge $e$ labeled by $e$ with no properties and $\rho(e) = (v,v)$.

We construct a parametric regular expression based on the formula $\phi$ \emph{inductively}, with a set of rational global parameters $\{?x_1, \dots, \\?x_n\}$. The base 
case starts from the first clause $C_1$, where we construct a parametric regular expression $r_1$ as follows:

\[
r_1 = (v, ?x_{i} \sim_i 0) \mid (v, ?x_{j} \sim_j 0) \mid (v, ?x_{k} \sim_k 0)
\]

if $?x_i$ is positive in $C_1$, then $\sim_i$ is $\neq$ otherwise $\sim_i$ is $= $. The cases for $?x_j, ?x_k$ are similar. 
% The sub-expression $r_1$ ensures that at least one literal in $C_1$ is satisfied by the assignment represented by the vertex.

If we encode $C_1, \dots C_i$ by a parametric regular expression $r_i$, we construct $r_{i+1}$ for clause $C_{i+1}$ as follows:
\begin{align*}
  r_{i+1} &=  r_i/(((e, true)/(v, ?x_{i} \sim_i 0) \\
    &\mid (e, true)/(v, ?x_{j} \sim_j 0) \\
    &\mid (e, true)/(v, ?x_{k} \sim_k 0)))
\end{align*}

where $\sim_i, \sim_j, \sim_k$ are defined as in the base case. 

According to the Definition~\ref{def:pquery}, we construct the parametric regular path query as $\text{PRPQ}(v, r_m)$, and if there exists a path in $G$ that matches $r_m$, 
then the corresponding assignment of global parameters $\{?x_1, \dots, ?x_n\}$ in $r_m$ also satisfies all clauses in $\phi$. Conversely, if $\phi$ is satisfiable, the model 
of all clauses provides an assignment to the global parameters $\{?x_1, \dots, ?x_n\}$ in $r_m$, and then there exists a path in $G$ that matches $r_m$. 

Therefore, we have reduced 3-SAT to the problem of evaluating a \query~query over a property graph, proving that the combined complexity is $\mathsf{NP}$-hard.
\end{proof}

\section{Naive Evaluation Algorithm}
\label{sec:naive-algo}
%!TEX root = main.tex
Algorithm~\ref{alg:naive_algorithm} is a straight-forward naive algorithm for $\text{PRPQ}(v, r)$ over a property graph \pgraph based on the 
SIMPLE path semantics~\cite{Farias2023evaluating} i.e.,the paths do not repeat any node. Algorithm~\ref{alg:naive_algorithm} 
returns a pair $(\texttt{true}, (p, \mu))$ if $p \in \mathcal{L}(r)$ is detected, and otherwise
returns $(\texttt{false}, \emptyset)$. Algorithm~\ref{alg:naive_algorithm} compiles $r$ into a parametric automaton $\mathcal{A}$, explores $G$ by 
\emph{synchronized breadth-first search} and accumulating all visited data constraints. As long as the algorithm reaches a final state in $\mathcal{A}$, 
the algorithm checks accumulated constraints by an SMT solver, and constructs a path by backtracking if the checking results are satisfied.

% Algorithm~\ref{alg:naive_algorithm} presents pseudocode of the naive algorithm. The outline of naive algorithm is as follows: 
% it compiles the input regular expression into a parametric automaton $\mathcal{A}$ with a single initial state $q_0$ 
% (guaranteed by the translation procedure), and then performs

\subsubsection*{Synchronized Transitions}
Synchronized BFS is a variant of search on the product graph, because parametric regular expressions 
include patterns on both edges and nodes. Synchronized BFS is based on synchronized transitions.
During traversal, whenever a new node $v'$ or edge $e'$ is encountered in $G$, the algorithm matches the element's labels 
against $\mathcal{A}$'s transition conditions, advances the automaton state accordingly, and accumulates the corresponding formulas.

When the algorithm visits a node $v$ in property graph $G$ and the `current' location in $\mathcal{A}$ is $q$, 
the algorithm \emph{makes a node synchronized transition} by \textsc{TransNode} function. {$\textsc{TransNode}(v, q, G, \mathcal{A})$} 
scans all transitions $q \xrightarrow{(l, \varphi, inv)} q'$ of $\mathcal{A}$ that originate from $q$, checks whether the 
transition label $l$ matches the label of $v$, and returns all satisfied successor states $q'$.

When the algorithm explores the neighbor edges from a node $v$ in $G$, the algorithm \emph{makes an edge synchronized transition} by \textsc{TransEdge} function.
{$\text{TransEdge}(e, q, G, \mathcal{A})$} scans all transitions $q \xrightarrow{(l, \varphi, inv)} q'$ that originate from $q$ and
matches with edges by case analysis on the $inv$ flag. If the $inv$ flag is $\top$, the function collects all edges $e'$ \emph{originate from} $v$ with $\lambda(e) = l$, and returns
with successor states and the destination node $v'$ pairwise. If the $inv$ flag is $\bot$, the function collect all edges $e'$ \emph{enter into} $v$ with $\lambda(e) = l$, and returns with 
successor states and the source node $v$ pairwise.

Formally, the synchronized transition functions are defined as:

\footnotesize{
\begin{align*}
  \label{help-fun-naive}
  \textsc{TransNode}(v, q, G, \mathcal{A}) &=  \{q' \mid q \xrightarrow{(l, \varphi, inv)} q' , l = \lambda(v) \} \\ 
  \textsc{TransEdge}(e, q, G, \mathcal{A}) &=  \begin{cases*}
     \{(v', e', q') \mid q \xrightarrow{(l, \varphi, \top)} q', \string^l = \string^(\lambda(e')), v \xrightarrow{e'} v' \} \\
      \{(v', e', q') \mid q \xrightarrow{(l, \varphi, \bot)} q', l = \lambda(e'), v' \xrightarrow{e'} v \} 
  \end{cases*}
\end{align*}
}
\normalsize

After a synchronized transition $q \xrightarrow{(l, \varphi, inv)} q'$ with an object $o \in V \cup E$, the algorithm substitutes the attributes in $\varphi$ with 
actual property values from $o$, and accumulate the resulting instantiated formula in a set of visited formulas. We formalize the accumulation 
by the following function:
\[
\textsc{Update}(q, q', o, \mathcal{A}, G, F) =  F \cup \{\varphi[attr/\pi(o, attr)] \mid q \xrightarrow{(l, \varphi, inv)} q'\}
\]

The naive algorithm manipulates \emph{search states} to record the above information. Each search state represents a snapshot containing: the current position in both the 
automaton and property graph ($q$ and $v$ respectively), the traversed edge ($e$), a reference to the preceding state ($prev$), and the accumulated formulas ($\mathcal{F}$).

\begin{definition}
  A search state is a tuple $(v, q, e, prev, \mathcal{F})$ where:
\begin{enumerate*}
\item $v \in V$ is the current graph node
\item $q \in Q$ is the current automaton state
\item $e \in E$ is the current graph edge. 
\item $prev$ is a pointer to the previous search state (enabling path reconstruction via backtracking~\cite{Farias2023evaluating})
\item $\mathcal{F}$ contains formulas with attribute variables instantiated using actual property values from visited nodes/edges
\end{enumerate*}
\label{def:search_state}
\end{definition}

\subsubsection*{Query Algorithm} The \textsc{NaiveQuery} procedure in Algorithm~\ref{alg:naive_algorithm} is the main procedure to evaluate a parametric regular query, which 
requires the following data structures:
\begin{enumerate*}
  \item Open, which is a queue of search states, with usual queue operations (enqueue, dequeue).
  \item Visited, which is a dictionary of search states using tuples $(v,e,q)$ as keys, that have already been explored, maintained to avoid infinite loops, and the visited node 
  can be collected by function $\textsc{VisitedNode}(Visited) = \{v \mid (v,e,q) \in Visited.keys()\}$

  % \item ReachedFinal, which is a set of nodes that have been reached in the automaton's final states, in case we re-discover them by a different end state, because 
  % a parametric automata can have several final states.
\end{enumerate*}

% The algorithm starts to explore from $(v, q_0)$. Lines 6-9 check whether the zero-length path containing $v$ is an answer, in which case $v$ needs to be a node of $G$. In line 9 we 
% initialize a search state based on $(v, q_0)$. Lines 10-13 make one synchronized transition on the automaton $\mathcal{A}$ with the start point $v$, accumulate the data constraint,  
% and then enqueue the resulting search states into the Open queue. The main loop of the algorithm (lines 14-27) continues until the Open queue is empty, or a final state is reached (lines 22-26). 
% In each iteration, the algorithm dequeues a search state $curr$ from Open, and then the algorithm scans the neighbours from the $v$ of $curr$, and makes \emph{two} synchronized transitions 
% on the automaton $\mathcal{A}$. The first synchronized transition is with the outgoing edge $e'$ from $v$, and the second synchronized transition is with the neighbour node $v'$ linked by the edge $e'$.
% After each synchronized transition in the main loop, the algorithm needs to accumulate the data constraint with the related transition, and then the algorithm checks if the position $(v', q')$) has been visited before. If not, the algorithm enqueues 
% the resulting search states into the Open queue and add to the Visited dictionary. If the automaton state $q'$ is a final state, the algorithm checks whether the accumulated formulas $\mathcal{F}$ are satisfiable, and if so, adds $v'$ to the ReachedFinal set. 

The procedure begins exploration from the initial state $(v, q_0)$. First, it checks whether a trivial path only containing $v \in V$ constitutes 
a valid answer (Lines 5-8), and such case does not require a model of global parameters. The procedure then performs an initial synchronization with $(v, q_0)$(Lines 8-12): 
executing one transition in $\mathcal{A}$ starting from $v$, accumulating the corresponding data constraints, and enqueuing the resulting search states into the Open queue.

The main loop (Lines 13-35) processes states until either the `Open' queue is exhausted, for each iteration:
(1) Dequeues a search state $(v, q, e, prev, \mathcal{F})$ from Open (Line 14), and skip repeated nodes  (Lines 16-18).
(2) Performs two synchronized transitions in $\mathcal{A}$, first with the outgoing edge $e'$ from $v$ by \textsc{TransEdge} (Line 15), and second with the adjacent 
node $v'$ connected via $e'$ by \textsc{TransNode} (Line 20). The algorithm accumulates instantiated data constraints after each transition by \textsc{Update} (Lines 19 and 21).
(3) Tests if the resulting state $(v', q')$ has been visited before (Line 23), if not, the algorithm enqueues the resulting search states into the Open queue and 
adds to the Visited dictionary.
(4) If the automaton state $q'$ is a final state, the algorithm checks whether the accumulated formulas $\mathcal{F}$ are satisfiable. If these conditions are satisfied, the algorithm constructs the path by the \textsc{GetPath} procedure via backtracking on 
the $prev$ domain of search states, and constructs a model of the global parameters by querying an SMT solver, and then the algorithm returns \texttt{true} with the path and
the model.

If the `Open' queue is exhausted, and no answer is detected, then the algorithm returns \texttt{false}.

% The naive algorithm for~{\small\query} has a polynomial average time complexity with the number of edges 
% $\mid E \mid$ and the number of nodes $\mid V \mid$ in a property graph $G$, and the total length of data constraints $c$ in $regex$. 
% The BFS has a time complexity of $O(\mid V \mid + \mid E \mid)$. Consider solving data constraints, the dominate constraints are linear real arithmetic formulas, because 
% string equalities have been ground after substitution, which are checked and removed during the preprocess of the smt solver. As a result, the naive algorithm produce constraints 
% in the length of $O((\mid V \mid + \mid E \mid) \cdot c)$, and the average time complexity of simplex which is used in z3 is quadratic to the length of formula, so we obtain 
%  $O(((\mid V \mid + \mid E \mid) \cdot c)^3)$ average time complexity for constraint solving, as well as the average time complexity of the naive algorithm.

% \begin{itemize}
%   \item \textbf{TransNode}$(s, G, \mathcal{A}) = \{q' \mid (q, (l, \varphi), q') \in  \Delta, l \in \lambda(v) \}$ with $s = (v, q, e, prev, \mathcal{F})$ and~\paut.
%   \item \textbf{TransEdge}$(s, G, \mathcal{A})$: Given a search state $s$ and a property graph $G$, this procedure
%   \item \textbf{Update}$(s, G, \mathcal{A})$: Given a search state $s$ and a property graph $G$, this procedure updates the search state by applying the transition
% \end{itemize} 

\begin{algorithm}
\caption{Query evaluation of~\query~over a property graph~\pgraph.} 
  \label{alg:naive_algorithm}                                                                                                                                                                                                                                                                                                                                                                                                                                                                                                                                                                                                                                                                                                                                                                                                                                                                                                                                                                                                                                                                                                                                                                                                                                                                                                                                                                                                                                                                                                                                                                                                                                                                                                                                                                                                                                                                                                                                                                                                                                                                                                                                                                                                                                                                                                                                                                                                                                                                                                                                                                                                                                                                                                                                                                                                                                                                                                                                                                                                                                                                                                                                                                                                                                                                                                                                                                                                                                                                                                                                                                                                                                                                                                                                                                                                                                                                                                                                                                                                                                                                                                                                                                                                                                                                                                                                                                                                                                                                                                                                                                                                                                                                                                                                                                                                                                                                                                                                                                                                                                                                                                                                                                                                                                                                                                                                                                                                                                                                                                                                                
\begin{algorithmic}[1]
      \Procedure{NaiveQuery}{$regex, G, v$}
            \State $\mathcal{A} \gets Automaton(regex)$ \Comment{$q_0$ initial state, $F$ final states} 
            \State Open.init()
            \State Visited.init()
            % \State ReachedFinal.init()
        
            \If{$v \in V$ \textbf{and} $q_0 \in F$}
                % \State ReachedFinal.add(v)
                % \State Solutions.add($v, \emptyset$)
                \State \textbf{return}(\texttt{true}, ($v$, $\emptyset$)) \Comment{No need for a model}
            \EndIf
            % \State init $\gets$ $(v, q_0, \bot, \bot, \emptyset)$
            \ForAll {$q'$ $\in$ \Call{TransNode}{$v, q_0, G, \mathcal{A}$}}
                \State $\mathcal{F}$ $\gets$ \Call{Update}{$q_0, q', v, \mathcal{A}, G, \emptyset$}
                \State start\_state $\gets$ $(v, q', \bot, \bot, \mathcal{F})$
                \State Open.enqueue(start\_state)
            \EndFor
            \While {Open $\neq \emptyset$}
            \State $\text{state}=(v, q, e, prev, \mathcal{F})$ $\gets$ Open.dequeue() %\Comment{dequeue a search state}
            \ForAll {$(v', e', q') \in$ \Call{TransEdge}{$e, q, G, \mathcal{A}$}}
            \If {$v' \in \textsc{VisitedNode}(\text{Visited})$}
             \State continue \Comment{Skip not simple path}
            \EndIf
              \State $\mathcal{F}'$ $\gets$ \Call{Update}{$q, q', e' ,\mathcal{A}, G, \mathcal{F}$}
              \ForAll  {$ q'' \in$ \Call{TransNode}{$v', q', G, \mathcal{A}$}}
                \State $\mathcal{F}''$ $\gets$ \Call{Update}{$q', q'', v', \mathcal{A}, G, \mathcal{F}'$}
                \State nextState $\gets$ $(v', q'', e', curr, \mathcal{F}'')$
                \If {$(v' ,q'' , e', *, *)$ $\notin$ Visited}
                    \State Visited.add(nextState)
                    \State Open.enqueue(nextState)
                    \If {$q'' \in F$ \textbf{and} \Call{CheckSat}{$\mathcal{F}''$}} 
                        % \State ReachedFinal.add(v')
                        \State path $\gets$ \Call{GetPath}{nextState}
                        \State $\mu$ $\gets$ \Call{GetModel}{$\mathcal{F}''$}
                        % \State Solutions.add($path$, $model$)
                        \State \textbf{return} (\texttt{true}, (path, $\mu$))
                    \EndIf
                \EndIf
              \EndFor
            \EndFor
        \EndWhile
        \State \textbf{return} (\texttt{false}, $\emptyset$) \Comment{No path detected}
      \EndProcedure

      \Procedure{GetPath}{state = $(v, q, e, prev, \mathcal{F})$}
      \If {$prev = \bot$}  
          \State \textbf{return} v
      \Else
          \State \textbf{return} \Call{GetPath}{$prev$}.extend(e, v)
      \EndIf
      \EndProcedure

\end{algorithmic}
\end{algorithm}

% Theorem: graph reachability + smt as an oracle
% algo + constraint reach + data as efficient == rpq
% by using z3 as an oracle

% \paragraph{Complexity Analysis} However, the naive algorithm has an exponential time complexity $O(2^{\mid E \mid + \mid V \mid \cdot |\mathcal{A}|})$ with 
% a property graph~\pgraph and a parametric automaton $\mathcal{A}$, as the exploration accumulates $O((|E| + |V|) \cdot |\mathcal{A}|)$~instantiated formulas, 
% and the complexity of solving accumulated formulas is $O(2^{(| E | + | V |) \cdot |\mathcal{A}|})$.

% The time complexity of the naive query algorithm is $O(2^{\mid E \mid + \mid V \mid})$~with a property graph~\pgraph \\and a parametric automaton $\mathcal{A}$, 
% where $\mid \mathcal{A} \mid$ is the size of the automaton. Although the traversal of the property graph can be done in polynomial time, the overall complexity is still dominated by the final solving process. During a traversal, the algorithm 

\section{Optimized Query Evaluation}
\label{sec:algos}
%!TEX root = main.tex

% In this section, we present an optimized query evaluation algorithm that improve the naive-algorithm significantly, 
% and the optimized algorithm meet the upper data complexity bound established by theorem~\ref{theorem:complexity}.
% This section presents an optimized algorithm in Algorithm~\ref{alg:query_algorithm} that improve the naive algorithm significantly, which  is based on \emph{macro states} 
% that 

Although Theorem~\ref{theorem:np_hardness} states that the combined complexity of parametric regular path queries is $\mathsf{NP}$-hard, Algorithm~\ref{alg:query_algorithm} 
presents a \emph{feasible algorithm} based on \emph{macro states}, to replace accumulated formulas with bounds of global parameters 

% based on  to avoid heavy RQC mechanisms in query evaluation. 
terms stored in \emph{macro states}.
\begin{definition}[Macro State]
  \label{def:macro_state}
   A macro state is a tuple $(q, e, v, prev, \\up, low, neq)$ where: 
    \begin{enumerate*}
        \item $q$~,~$e$~,~$v$~,~$prev$ are the same as in a search state according to Definition~\ref{def:search_state}.  
        \item $up$ stores the upper bounds of terms. Formally, for $\sum a_i p_i \le c$, 
        where $a_i, c \in \mathbb{Q}$ and $p_i \in \chi$, if and only if $up\left[\sum a_i p_i\right] = c$
        \item $low$ stores the lower bounds of terms. Formally, for $\sum a_i p_i \ge c$, where $a_i, c \in \mathbb{Q}$ and 
        $p_i \in \chi$, if and only if $low\left[\sum a_i p_i\right] = c$
        \item $neq$ handles the $\neq$ operand.  Formally, for $\sum a_i p_i \neq c$, where $a_i, c \in \mathbb{Q}$ and 
        $p_i \in \chi$, if and only if $neq\left[\sum a_i p_i\right] = c$
\end{enumerate*} 
\end{definition}

As macro states only permit upper and lower bounds, we introduces a new constant $\epsilon$ with $\epsilon > 0$, and rewrite the (in)equalities of each data constraint 
by applying the following rewriting rules:
\begin{align}
  \begin{aligned}
     t_l < t_r &\mapsto t_l + \epsilon \le t_r \\
     t_l > t_r &\mapsto t_l - \epsilon \ge t_r \\
     t_l = t_r &\mapsto t_l \le t_r \land t_l \ge t_r 
  \end{aligned}
  \label{rewrite}
\end{align}

The rewriting introduces $O(n)$ new (in)equalities over linear terms over rational variables, where $n$ is the number of (in)equalities in original formula.

Algorithm~\ref{alg:query_algorithm} takes the synchronized transitions framework as the naive algorithm with the same \textsc{TransNode} and \textsc{TransEdge} procedures, and
the main change is to leverage a new procedure to update a macro state and check the consistency of bounds by querying an \emph{oracle}, which is formalized in the \textsc{NewUpdate}
procedure in Algorithm~\ref{alg:update_macro_state}.

\begin{algorithm}
\caption{Update the bounds of terms $(up, low)$ with an object $o$ of a property graph $G$ from $q$ to $q'$ in a parametric automaton $\mathcal{A}$}
  \label{alg:update_macro_state}

\begin{algorithmic}[1]
      \Procedure{NewUpdate}{$q, q', o, A, G, (up, low)$} 
            \State $\varphi' \gets \varphi[attr/\pi(o, attr)]~\text{where}~q \xrightarrow{(l, \varphi, inv)} q'$
            \ForAll {$atom \in \varphi'$} \Comment{$\varphi'$ is a conjunction of atoms}
            \If {$atom$ is string constraint \textbf{and} not satisfiable}
                \State \textbf{return} $\bot, \emptyset$ \Comment{Inconsistency detected}
            \EndIf
            \State Normalize $atom$ into $\sum a_i p_i \sim c$ where $\sim \in \{\le, \ge, \neq\}$ 
            \If {$\sim~\text{is}~\le$}
                \If {$\sum a_i p_i \notin up$ \textbf{or} $c \le up[\sum a_i p_i]$}
                    \State $up[\sum a_i p_i] \gets c$
                \EndIf
            \ElsIf {$\sim~\text{is}~\ge$}
                \If {$\sum a_i p_i \notin low$ \textbf{or} $c \ge low[\sum a_i p_i]$}
                    \State $low[\sum a_i p_i] \gets c$
                \EndIf 
             \ElsIf {$\sim~\text{is}~\neq$} 
                  \State $neq[\sum a_i p_i] \gets c$
                \EndIf
              \EndFor
              \If{\Call{CheckSat}{up, low, neq}} \Comment{Query the oracle to check consistency}
                  \State \textbf{return} $\top, (up, low, neq)$ \Comment{Update successful}
              \Else
                  \State \textbf{return} $\bot, \emptyset$ \Comment{Inconsistency detected}
              \EndIf
            \EndProcedure
\end{algorithmic}
\end{algorithm}

% The update procedure handles each data constraint by iterating atoms of a conjunction.
% The procedure first substitutes the attributes in the atom with actual property values from $o$ (Line 2). Then, it initializes an empty set of bounds (Line 3) and 
% iterates each atom in the conjunction (Line 4). If the atom is a string constraint, we can directly check whether it is satisfiable, because the query language only allows equality 
% over string attributes which are ground after substitution (Line 5-7).

% For arithmetic atoms, the algorithm normalizes each atom into an affine form $\sum a_i p_i \sim c$, and update the bounds of terms according to the comparison operator $\sim$ (Lines 9-18).
% After updating the bounds of terms, the update procedure adds the bounds of terms to the set `bounds' and checks satisfiability of the bounds to make sure consistency (Lines 25-34). 

% \paragraph{Optimized Query Algorithm} 

\begin{algorithm}
\caption{Optimized Query Evaluation of~{\small\query} over a property graph~{\small\pgraph}}
\label{alg:query_algorithm}
\begin{algorithmic}[1]
      \Procedure{OptimizedQuery}{$regex, G, v$}
            \State $\mathcal{A} \gets Automaton(regex)$ \Comment{$q_0$ initial state, $F$ final states}
            \State Rewrite all data constraints in $\mathcal{A}$ according to (\ref{rewrite})
            \State Open.init()
            \State Visited.init()
            % \State ReachedFinal.init()
        
            \If{$v \in V$ \textbf{and} $q_0 \in F$}
                % \State ReachedFinal.add(v)
                % \State Solutions.add(v, $\emptyset$)
                \State \textbf{return}(\texttt{true}, ($v, \emptyset$))
            \EndIf
            % \State init $\gets$ $(v, q_0, \bot, \bot, \emptyset)$
            \ForAll {$q'$ $\in$ \Call{TransNode}{$v, q_0, G, \mathcal{A}$}}
              \State $init, b_0 \gets$ \Call{NewUpdate}{$q_0, q', v, \mathcal{A}, G, (\emptyset, \emptyset, \emptyset)$}
              \If {$init$}
                \State $(up_0, low_0, neq_0) \gets b_0$
                \State start\_state $\gets$ $(v, q', \bot, \bot, up_0, low_0, neq_0)$
                \State Open.enqueue(start\_state)
                \EndIf
            \EndFor
            \While {Open $\neq \emptyset$}
            \State $\text{state}=(v, q, e, prev, up, low, neq)$ $\gets$ Open.dequeue() 
            \ForAll  {$(v' , e', q') \in$ \Call{TransEdge}{$e, q, G, \mathcal{A}$}}
            \If{$v' \in \textsc{VisitedNode}(\text{Visited})$}
            \State continue \Comment{skip no simple path}
            \EndIf
              \State $b \gets (up, low, neq)$
              \State $flag, b' \gets$ \Call{NewUpdate}{$q, q', e', \mathcal{A}, G, b$}
              \If { \textbf{not} $flag$} 
                    \State continue \Comment{inconsistency detected}
                \EndIf
              \ForAll {$ q'' \in$ \Call{TransNode}{$v', q', G, \mathcal{A}$}}
                \State $flag', b'' \gets$ \Call{NewUpdate}{$q', q'', v', \mathcal{A}, G, b'$}
                \If { \textbf{not} $flag'$} 
                    \State continue \Comment{inconsistency detected}
                \EndIf
                \State (up', low') $\gets$ b''
                \State nextState $\gets$ $(v', q'', e', curr, up', low', neq')$
                \If {$(v' ,q'' , e', *, *, *, *)$ $\notin$ Visited}
                    \State Visited.add(nextState)
                    \State Open.enqueue(nextState)
                    \If {$q'' \in F$} 
                        % \State ReachedFinal.add(v')
                        \State path $\gets$ \Call{GetPath}{nextState}
                        \State $\mu$ $\gets$ \Call{GetModel}{$up', low', neq$}
                        % \State Solutions.add(path, model)
                        \State \textbf{return}(\texttt{true}, (path, $\mu$))
                    \EndIf
                \EndIf
              \EndFor
            \EndFor
        \EndWhile
        \State \textbf{return} (\texttt{false}, $\emptyset$) \Comment{No path detected}
        \EndProcedure
\end{algorithmic}
\end{algorithm}

\paragraph{Complexity Analysis} However, Algorithm~\ref{alg:query_algorithm} queries the linear programming solver $O(2^c \cdot |A| \cdot (|V| + |E|))$ times, where 
$\mathcal{A}$ is a parametric automaton with $c$ (in)equality constraints. The $2^c$ factor arises because each of the $c$ constraints 
can be independently either present or absent in a macro state, and the algorithm must explore all such combinations in the worst case.

% Algorithm~\ref{alg:update_macro_state} updates and checks the bounds in $O(\mathrm{poly}(c))$ time, as Algorithm~\ref{alg:update_macro_state} iterates the $c$ (in)equalities
% and the oracle checks the bounds by \emph{linear programming} which is $O(\mathrm{poly}(c))$ time~\cite{Bradley-Book}. 
% \paragraph{Complexity Analysis} The time complexity of the optimized query algorithm is $O(poly(|E| + |V|))$ with~\pgraph. The algorithm queries the oracle 
%  to check the consistency of the constraints. In each query, the number of constraints $c$ passed to the oracle is much smaller than $|E| + |V|$, and the query can
% be done in polynomial time of $c$. Overall, the optimized query algorithm significantly reduces the complexity compared to the naive approach.

\section{Experimental evaluation}
\label{sec:exp}
%!TEX root = main.tex

In this section, we present an experimental evaluation of parametric regular path queries using the algorithms presented in Section~\ref{sec:naive-algo} 
and Section~\ref{sec:algos}. Given that, to the best of our knowledge, no other system supports parametric regular path queries, we focus on showing the efficiency of our algorithms over real world graphs. In particular, we focus on showing that the optimized version of our approach, presented in Algorithm~\ref{alg:query_algorithm} is a \emph{feasible} solution over large graphs, and that its macro-state based optimization achieves significant improvement of the baseline solution given in Algorithm~\ref{alg:naive_algorithm}. 
%In order to show the \emph{feasibility} and \emph{superiority} of our macro-state based algorithm, in our evaluation we focus on the following research questions:
For this, we focus on the following research questions:
\begin{enumerate}[label=\textbf{RQ\arabic*}]
\item How does Algorithm~\ref{alg:query_algorithm} scale with graph size and density?
\item How does the performance vary with different queries depending both on regular expressions and data constraints?
\item How does the oracle query affect the overall performance?
\item What is the performance improvement of the optimized algorithm (Algorithm~\ref{alg:query_algorithm}) over the naive algorithm (Algorithm~\ref{alg:naive_algorithm})?
\end{enumerate}

%(1) What is the performance improvement of the optimized algorithm over the naive algorithm? (2) How does the optimized algorithm scale with graph size and density? (3) How does performance vary with different queries in both regular expressions and data constraints? (4) How does the oracle query affect the overall performance?

\subsection{Experiment Setup}

\paragraph{Implementation}
We implement parametric regular path queries and their evaluation algorithms in \textsc{MillenniumDB}~\cite{mdb}, an open-source persistent graph database system written using the \verb|C++| programming language. We use the \textsc{Z3} SMT solver~\cite{10.1007/978-3-540-78800-3_24} as oracle for 
queries. The syntax of parametric regular expression is embedded into the MQL query syntax used in \textsc{MillenniumDB}, 
which resembles Cypher and GQL. Both the naive query algorithm and the optimized query algorithm are incorporated into the standard execution pipeline of \textsc{MillenniumDB}.

\paragraph{Datasets} Table~\ref{tab:dataset} lists the characteristics of the labeled-graph dataset used in the experiments. 
The ICIJ-Leaks and ICIJ-Paradises datasets~\cite{icij_offshoreLeaks, icij_Paradisepapers} contain information about offshore entities and their relationships, which 
have natural attributes and labels on both nodes and edges. These datasets are relatively sparse.
The LDBC01 and LDBC10 datasets are originally used for graph database benchmarking~\cite{Lissandrini:2018:GDB}, which are synthetic datasets and simulate
a social network with various types of nodes and edges. These datasets are relatively densely connected, and the LDBC10 dataset is significantly larger than
the other datasets used in the experiments.
The Pokec dataset~\cite{takac2012Pokeclargegraph} is a social network. Nodes represent users and contain many attributes including age, gender, 
and location, while edges represent relations between users. Labels on nodes and edges are generated synthetically. Pokec dataset has a large size and higher density.
The Telecom dataset~\cite{snapnets} contains the relationships between users and behaviors in a telecom network, which has natural attributes and labels on both nodes and edges. 
This dataset is also the most dense graph among the datasets we used.

\begin{table}[ht]
  \caption{Dataset Characteristics.($1\,\mathrm{M} = 10^6, 1\,\mathrm{K} = 10^3$)}
  \label{tab:dataset}
  \small
  \begin{tabular}{ccccccc}
    \toprule
     Name&Dataset & $|V|$ & $|E|$ & $|Le|$ & $|Lv|$ & $\frac{|E|}{|V|}$\\
    \midrule
    IL&ICIJ-Leaks &1.9M & 3.2M & 14 & 5 & $1.68 $  \\
    IP&ICIJ-Paradises & 163K & 364K & 6 & 5 & $2.23$  \\
    L0&LDBC01 & 180K & 768K &  8&  15&$4.17$ \\
    L1&LDBC10 & 30M & 178M & 9 & 15 & $5.93$ \\
    PO&Pokec & 1.6M & 30.6M & 3 & 1 & $18.7$ \\
    TE&Telecom & 170K & 50M & 3 & 4  & $294$ \\

  \bottomrule
\end{tabular}
\end{table}

\paragraph{Query Generation} Query templates in our experiments consist of two components: \emph{regular path templates} which are regular expressions that represent patterns of paths, and 
\emph{data constraint templates} that describe relations between objects along a path. A \emph{query template} is instantiated by incorporating a data constraint into a regular path template. A concrete query is then constructed by: (i) incorporating a start node into the query (ii) replacing label placeholders in path templates with concrete edge labels; and (iii) replacing attribute name placeholders in data constraint templates with concrete attribute names. Next we describe each component in detail.

% We select the top-12 frequently-used patterns in property-path queries from~\cite{DBLP:journals/pvldb/BonifatiMT17} as regular path templates, 
% which are listed in the table~\ref{tab:queries}. 

Table~\ref{tab:queries} lists the 12 regular path templates used in the experiments. These are the top-12 most frequent property-path query patterns occurring in practice based on the study of publicly available query logs for SPARQL endpoints~\cite{DBLP:journals/pvldb/BonifatiMT17}. We use regular expressions in Table~\ref{tab:queries} as regular path  templates in the experiments, and fix $k=3$. Notice that edge labels here (e.g. $a_1, a_2$, etc.) are abstract placeholders which get instantiated with concrete labels.

\begin{table}[hb]
    \caption{Regular templates used in the experiments. }
  \label{tab:queries}
  \begin{tabular}{c c | c c}
    \toprule
    Name & Type & Name & Type \\
    \midrule
    $Q_1$ & $(a_1 | \cdots |a_k)^*$  & $Q_7$ & $a_1?/\cdots/a_k?$\\
    $Q_2$  & $a^*$ & $Q_8$ & $a/(b_1| \cdots |b_k)$\\
   $Q_3$ & $a_1/\cdots/a_k$ & $Q_9$ &  $a_1/a_2?/\cdots/a_k?$ \\
   $Q_4$ & $a^*/b$ & $Q_{10}$ & $(a/b^*)|c$ \\ 
   $Q_5$ & $a_1|\cdots|a_k$ & $Q_{11}$ &$a^*/b?$\\
    $Q_6$ & $a^+$ & $Q_{12}$ & $a/b/c^*$\\
  \bottomrule
\end{tabular}
\end{table}

We classify regular path templates into three categories based on their occurrence in query logs (see Table 5 in  in~\cite{DBLP:journals/pvldb/BonifatiMT17}). The results of this classification are listed in Table~\ref{tab:query_categories}.

\begin{table}[htbp]
  \centering
  \caption{Categories of regular templates used in the experiments.}
  \label{tab:query_categories}
  \small
  \begin{tabular}{c|c|c|c}
    \toprule
    Category & \makecell{Regular \\Templates} & \makecell{Relative \\Occurrence} & \makecell{Total \\Percentage}\\
    \midrule
    Frequently-used & $Q_1$, $Q_2$, $Q_3$, $Q_4$ & >10\% & 87.58\%\\
    \hline
    Occasionally-used & $Q_5$, $Q_6$, $Q_7$ & >1\% and <10\% &12.34\%\\
    \hline
    Rarely-used & \makecell{$Q_8$, $Q_9$, $Q_{10}$,\\ $Q_{11}$, $Q_{12}$} & >0.01\% and <1\% & 0.08\%\\
    \bottomrule
  \end{tabular}
  
\end{table}

Table~\ref{tab:data_constraint} (column labelled ``Description'') lists the descriptions of 5 data constraint templates we use, including two \emph{simple arithmetic constraints} $D_1$ and $D_2$, and three \emph{complex arithmetic constraints} $D_3$, $D_4$ and $D_5$. Combining the five data constraint templates with the twelve regular path templates, gives us a total of 60 query templates. A concrete query is then constructed in two phases.  First, we replace the label placeholders in path templates with concrete edge labels, and replace attribute name placeholders in data constraint templates with concrete attribute names, giving rise to a parametric regular expression \texttt{pregex}. We then select a starting node $v$ for our parametric regular path query 
$(v,\texttt{pregex})$ (see Definition~\ref{def:pquery}). An example of a partially instantiated query template (changing only the edge labels) is given in Table~\ref{tab:data_constraint} (rightmost column).
               The source vertices are chosen randomly from the graph, and replacing label placeholders with labels selected from most frequent edge labels occurring in the graph. These 100 concrete query instances per each template are used to evaluate the naive algorithm, 
               the optimized algorithm, and we also use a version which removes the data constraints to compare the impact of incorporating data constraints into RPQs. 
               The latter serves as a baseline which is supported by \textsc{MillenniumDB}, so it allows us to do a fair comparison.

%The source vertices are chosen randomly from the graph, and the labels are selected from most frequent edge labels occurring in the graph. Varying the starting nodes, we generate 300 queries for each data set, comprising 100 queries in naive algorithm, 100 queries in optimized algorithm, and 100 queries only with the regular template without data  constraints, where the labels are consistent across the three groups of queries, and the source vertices are consistent between the regular queries and queries with the optimized algorithm.

\begin{table*}[htbp]
  \centering
  \caption{Data constraints used in the experiment. }
  \label{tab:data_constraint}
  \begin{tabular}{c |c| c| c}
    \toprule 
    Name & Description & Example with $a^*$ & Category\\
    \midrule
    D1 & \makecell{The distance between the average of attribute $attr$\\ and the values of $attr$ should within a threshold $c$} & \makecell{\( \begin{array}{rcl} &\{a_v~, ?p - attr \le c \land attr - ?p \le c \}/ \\ &(\{a_e, \top\} / \{a_v~, ?p - attr \le c \land attr - ?p \le c\})^* \end{array} \)} & Simple\\
    \midrule
    D2 & \makecell{Upper and lower bound of an attribute $attr$} & \makecell{\( \begin{array}{rcl} &\{a_v~, ?p \le attr \land ?q \ge attr\}/ \\ &(\{a_e, \top\} / \{a_v~, ?p \le attr \land ?q \ge attr\})^* \end{array} \)} & Simple \\
    \midrule
    D3 & \makecell{The distance of upper and lower bound of attribute $attr$\\ should be within threshold $c$.} & \makecell{\(
      \begin{array}{rcl}
        &\{a_v~, ?p \le attr \land ?q \ge attr \land ?q - ?p \le c\}/ \\
        &(\{a_e, \top\} / \{a_v~, ?p \le attr \land ?q \ge attr \land ?q - ?p \le c\})^*
      \end{array}
    \)} & Complex\\
    \midrule
    D4 & \makecell{Let the value of attribute $attr_1$ of the start point be $a_{10}$,~\\and the value of attribute $attr_2$ of the start point be $a_{20}$\\
    for each successor along a path,\\ $0.5\cdot  a_{10} + c_1 \le attr_1$ and $|a_{20} - attr_2| \le c_2$  } & 
    \makecell{\( \begin{array}{rcl} &\{a_v~, ?p = attr_1 \land ?q = attr_2 \}/ \\ &(\{a_e, \top\} / \{a_v~, ?p \cdot 0.5 + c_1 \le attr_1 \land  ?q - attr_2 \le c_2  \\&\land attr_2 - ?q \le c_2\})^* \end{array} \)} & Complex\\
    \midrule
    D5 & \makecell{Let the two-dimensional manhattan distance between \\the start point $(x_1, y_1)$ and  each node along a path \\$(x_2, y_2)$ be within a threshold $c$.} & 
    \makecell{\( \begin{array}{rcl} &\{a_v~, ?p = attr_1 \land ?q = attr_2\}/ \\ &(\{a_e, \top\} / \{a_v~, ?p - attr_1 + ?q - attr_2 \le c 
                                                                                                                        \\ & \land attr_1 - ?q + ?p - attr_2 \le c 
                                                                                                                        \\ & \land attr_1 - ?q + attr_2 - ?p \le c  
                                                                                                                        \\ & \land ?q - attr_1 + attr_2 - ?p \le c \})^* \end{array} \)} & Complex\\
    \bottomrule
  \end{tabular}
\end{table*}

\paragraph{How we ran the experiments?}
The evaluation is conducted on a Ubuntu \texttt{22.04} LTS subsystem on a Intel 13700H laptop assigned with \qty{16}{GB} RAM. The experiments set a time-out threshold of \qty{10}{s} for each query for all data sets except the Telecom dataset, and the time-out threshold for the Telecom dataset is set to \qty{30}{s} due to its larger size and density.

\paragraph{Baseline} As a baseline we use the default implementation of RPQs in \textsc{MillenniumDB}. This means that for each query we generate, we have a baseline RPQ which simply removes all the data constraints. This will allow us to measure the effect of adding data constraints to RPQs.

\paragraph{Evaluation Metrics}
We evaluate the performance of the algorithms based on three metrics: (1) running time of each query;  
(2) general memory consumption: each run was executed against a dedicated, freshly started instance of~\textsc{MillenniumDB}, and the memory consumption was measured externally 
by monitoring the peak Resident Set Size (RSS) of \textsc{MillenniumDB} process during query execution; (3) oracle query memory consumption: we export the memory consumption 
of \textsc{Z3} during a query evaluation by its \texttt{C} API; and (4) oracle query count: the number of invocations of 
the SMT solver recorded in the logs of \textsc{MillenniumDB}, which we treat as queries to an external oracle, because it is a new feature introduced to graph path queries.

\subsection{Scalability Evaluation}
This subsection studies the \emph{feasibility} of the optimized algorithm (Algorithm~\ref{alg:query_algorithm}). For this, we test how the optimized algorithm performs across different queries, graph sizes and graph density; i.e. we answer research questions \textbf{RQ1}, \textbf{RQ2} and \textbf{RQ3}. We start by measuring the time performance of our solution over different queries and different graph sizes.

\paragraph{Time Performance}
We study the \emph{running-time distribution} of the optimized algorithm. Although our macro-state algorithm increases significantly rather than normal regular path queries,
the optimized algorithm still performs well across different datasets and queries.

Figure~\ref{fig:time-data} presents the running time distribution across datasets and data constraints, 
and Figure~\ref{fig:time-reg} presents the running time distribution across datasets and regular templates. We observe that the optimized algorithm maintains a low median 
running time across datasets, and distributions over sparse graphs are more concentrated than those over dense graphs, while dense graphs, although running time distributions are wider, 
most queries can still be evaluated within \qty{1}{s}. Such results indicate that \emph{the optimized algorithm scales well with graph density and size}, and provide a positive answer to 
\textbf{RQ1} from the perspective of time performance.

The performance variation with different queries is also studied and we obtain positive results.
According to the results in Figure~\ref{fig:time-reg}, we observe that the optimized algorithm performs well over frequently-used templates with most queries can be finished within 
\qty{1}{s} \emph{no matter whether the data constraint is simple or complex}, which cover nearly 90\% of real-world queries. According to the results in Figure~\ref{fig:time-data}, 
we observe that complex data constraints have better time performance than simple data constraints in terms of median running time and the concentrated distribution, while the simple 
data constraints have a longer tail in the distribution but most queries can still be evaluated within \qty{1}{s}. These results indicate that \emph{the optimized algorithm scales well with 
different queries}, and provide a positive answer to \textbf{RQ2} in terms of time performance.

\paragraph{Baseline comparison} While we showed that our optimized algorithm scales well over real-world graph, we still need to measure the 
impact of adding complex data comparisons to RPQs. For this, in Figure~\ref{fig:rpq-vs-prpq}, we measure the time of running the regular pattern 
of each of tested queries and compare with the time of running parametric regular path queries.
% of each of the tested queries and see the average and maximum over all of the datasets, 
As can be observed, the optimized algorithm takes much more time compared to usual regular path queries by 10 times more average running time and wider distribution, 
as normal regular path queries can be evaluated within \qty{15}{ms} in most cases in Figure~\ref{fig:rpq-vs-prpq}. The reason for this is that the optimized 
algorithm does much more work since the data constraint portion of the query must be satisfied which requires an extensive amount of queries to the oracle 
during the evaluation. Next we measure the effect of oracle queries.

% \begin{table}[htbp]
%   \centering
%   \caption{Running time of regular path queries across datasets.}
%   \label{tab:origin-time}
%   \begin{tabular}{c|c|c}
%     \toprule
%     Query Group& Average Time(ms) & Maximal Time(ms) \\
%     \midrule
%     Frequently-used&0.44&17.03 \\
%     Occasionally-used&0.58&15.69 \\
%     Rarely-used&0.43&15.93 \\
%     \bottomrule
%   \end{tabular}
% \end{table}

\begin{figure}[htbp]
  \centering
  \includegraphics[width=.3\textwidth]{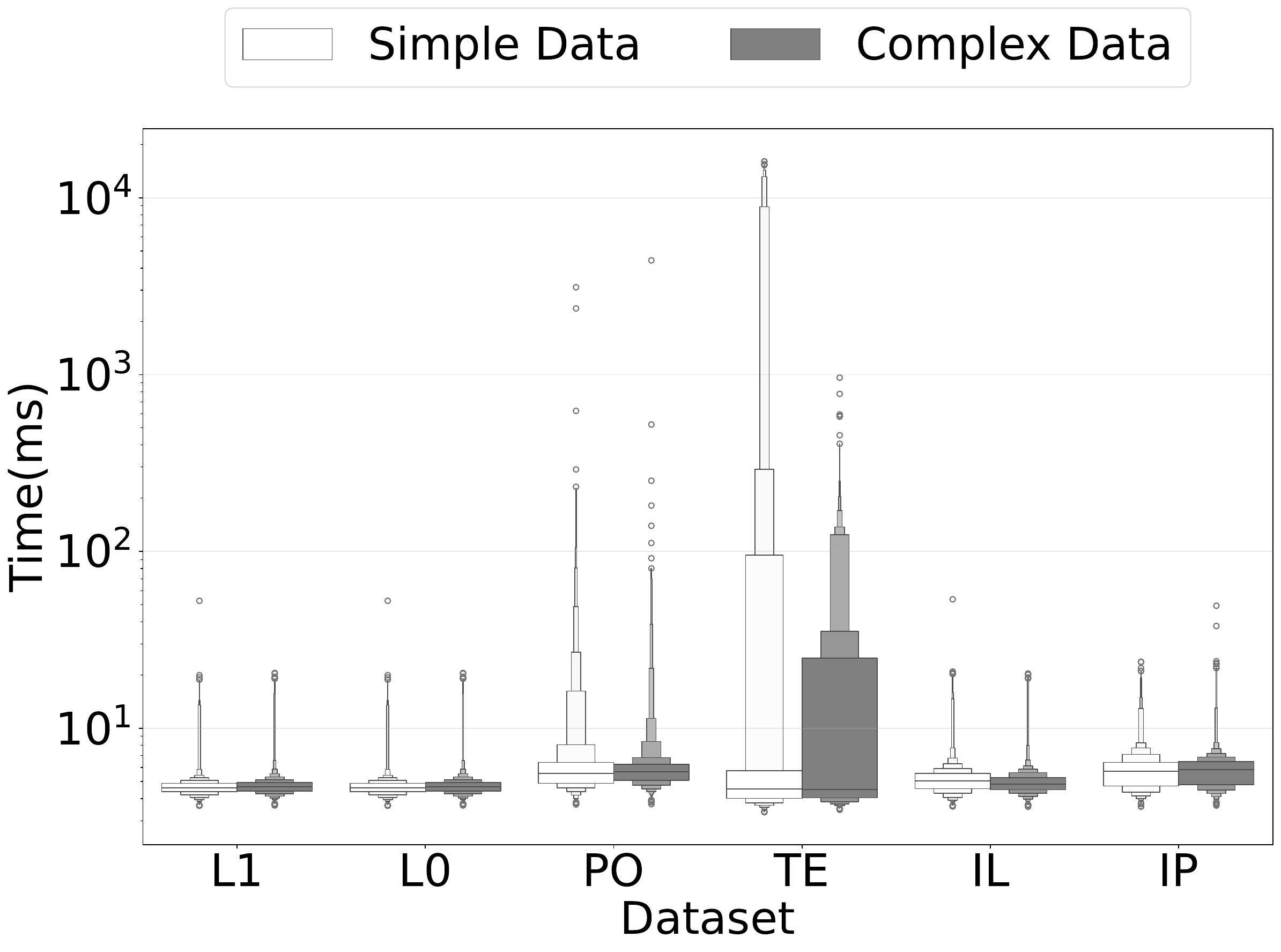}
  \caption{Running time distributions for queries across different datasets and data constraints in Table~\ref{tab:data_constraint}}
  % Simple data constraints are $D1$ and $D2$, and complex data constraints are $D3$, $D4$ and $D5$}
  \label{fig:time-data}
\end{figure}

\begin{figure}[htbp]
  \centering
  \includegraphics[width=.3\textwidth]{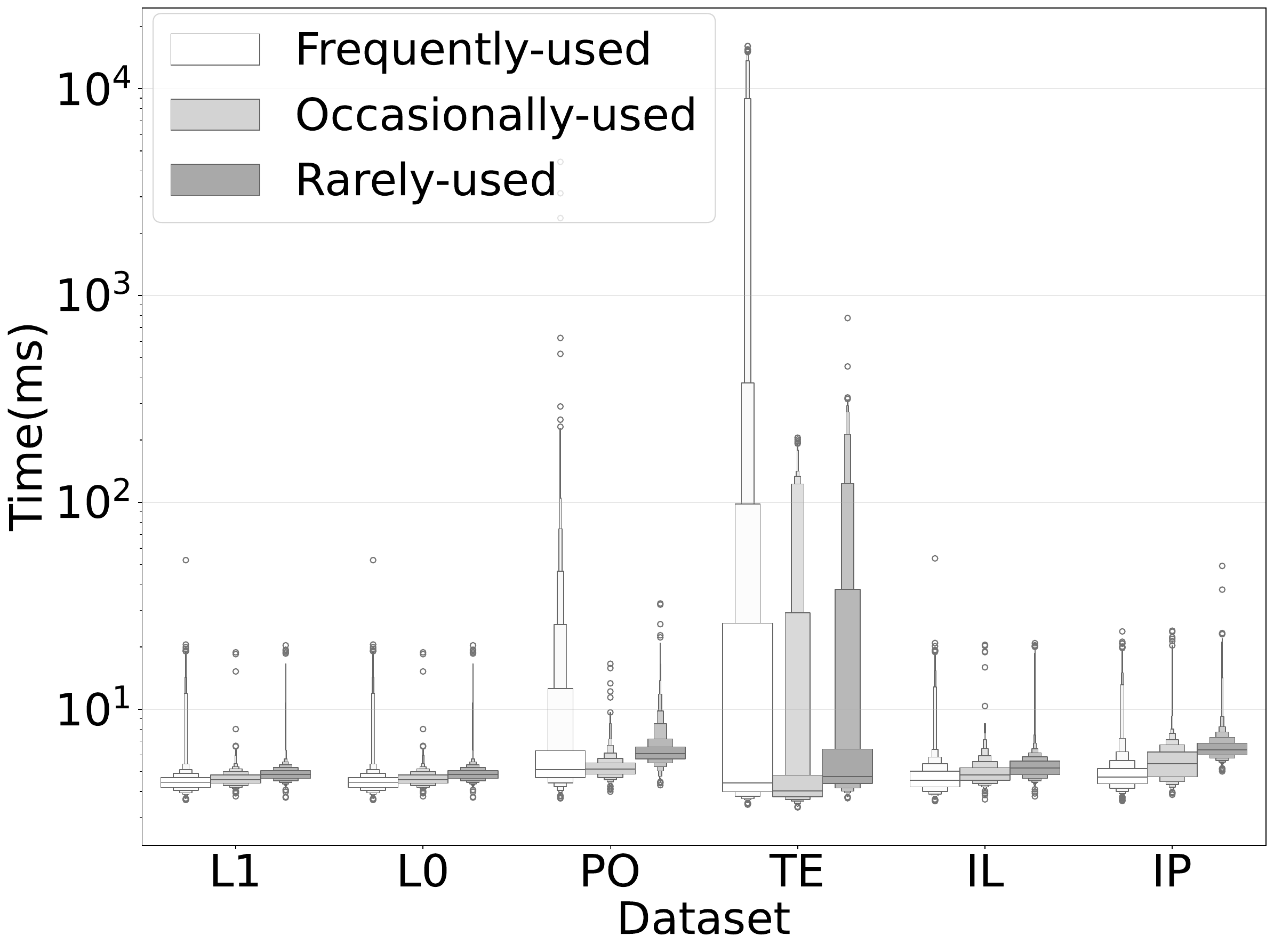}
  \caption{Running time distributions for queries across different datasets and regular templates in Table~\ref{tab:query_categories}}
  % Frequently-used templates are $Q_1 - Q_4$, occasionally-used templates are $Q_5 - Q_{8}$, and rarely-used templates are $Q_9 - Q_{12}$.}
  \label{fig:time-reg}
\end{figure}

\begin{figure}[htbp]
  \centering
  \includegraphics[width=.3\textwidth]{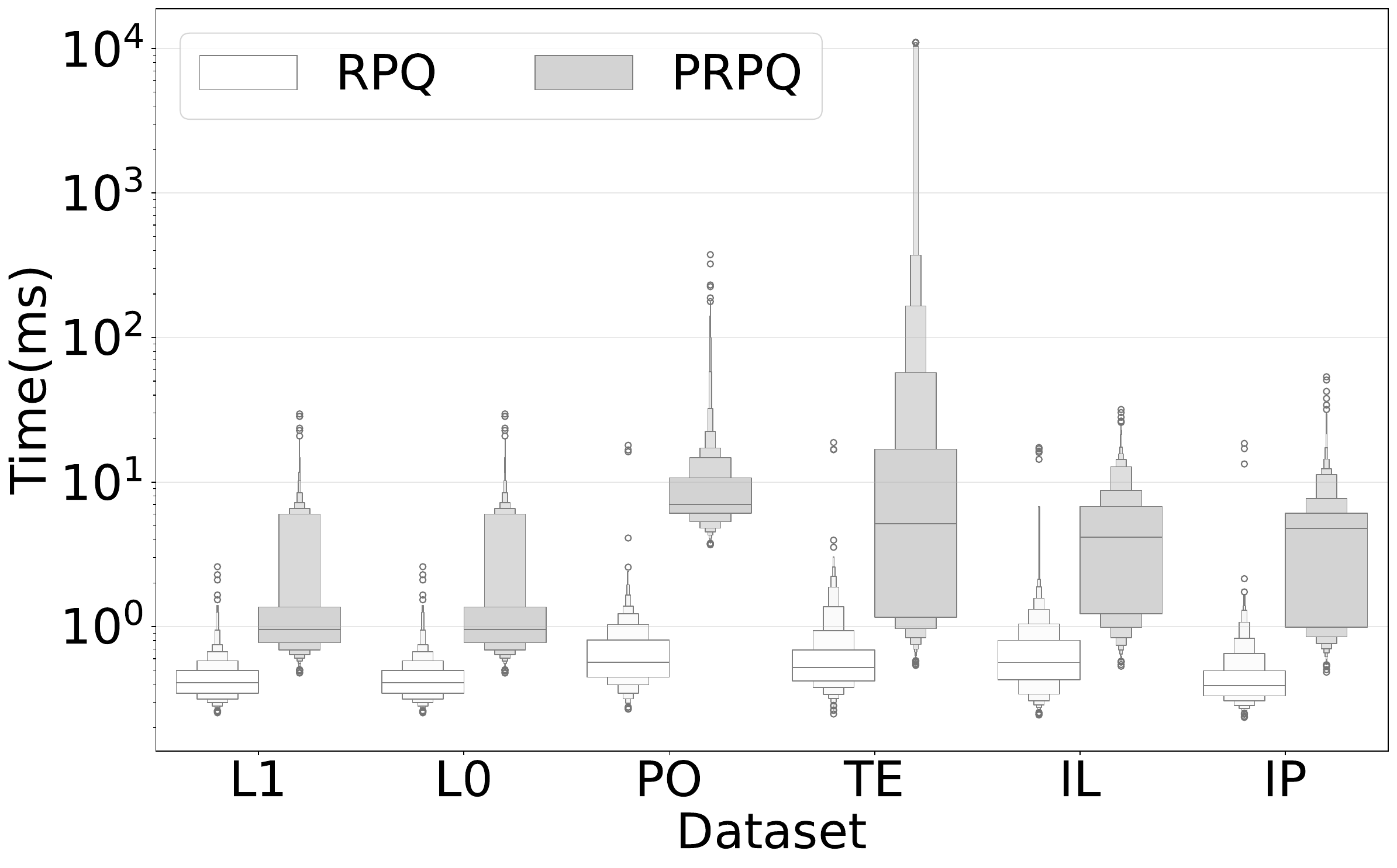}
  \caption{Running time of regular path queries and parametric regular path queries across datasets.}
  \label{fig:rpq-vs-prpq}
\end{figure}

\paragraph{Oracle Query Count}
From the view of complexity analysis, the optimized algorithm invokes the oracle with a length $c$ which is much smaller than the size of the graph $|V| + |E|$. As 
a result, the total count of oracle queries dominates the performance of a parametric regular path query. Statistical evidence in Figure~\ref{fig:oracle-count} shows that the 
running time of an individual query is strongly correlated to the oracle query count during the evaluation by a correlation coefficient 0.964 with $p < 0.001$.

\begin{figure}[htbp]
  \centering
  \includegraphics[width=.3\textwidth]{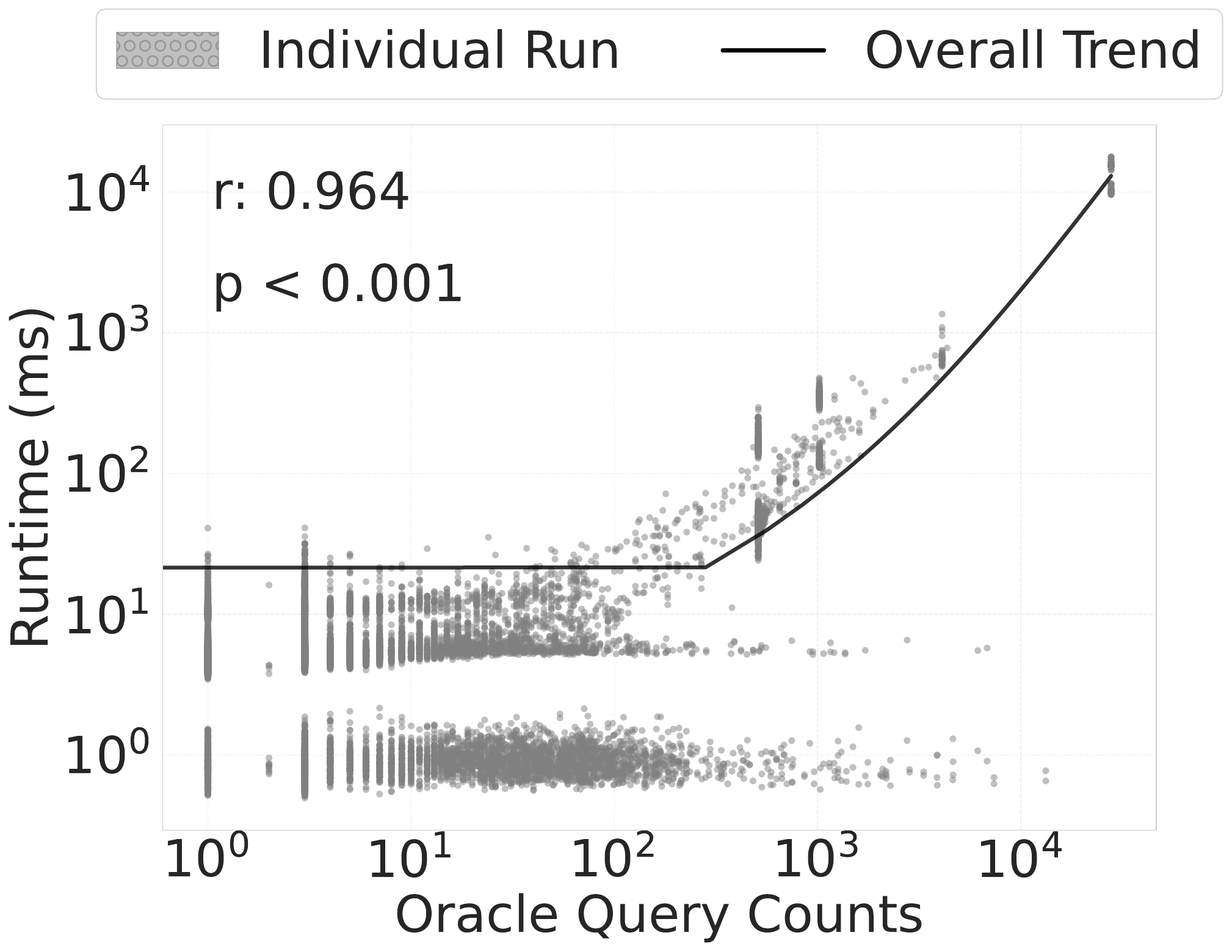}
  \caption{The distribution of each individual parametric regular query running time and its oracle query count}
  \label{fig:oracle-count}
\end{figure}

Figure~\ref{fig:oracle-count-data} presents the oracle query count distributions across data constraints and datasets. We observe that the parametric regular queries in dense graphs 
produce much more oracle queries compared to sparse graphs, while most queries among the dense datasets keep oracle query count below $10^3$, except the simple-data-constraint queries 
in TE. We also observe that the optimized algorithm queries the oracle more times for \emph{simple} data constraints, especially for large and dense graphs L1, TE and PO, which indicates 
that complex data constraints can terminate earlier, due to reaching a contradiction more easily.

According to Figure~\ref{fig:oracle-count-reg} and Figure~\ref{fig:time-reg}, we observe that the queries with the total oracle 
query count below $10^3$ can be evaluated within \qty{100}{ms} in most cases, and for the extreme cases with oracle query count above $10^4$, the running time can also be within 
\qty{10}{s}. With simple estimations, the average time for each oracle query is about the level of \qty{1}{ms}, which is acceptable for most applications. These results indicate that the 
optimized algorithm scales well with \emph{high oracle query counts}, and provide strong support for our complexity analysis and a positive answer to \textbf{RQ3}.

\begin{figure}[htbp]
  \centering
  \includegraphics[width=.3\textwidth]{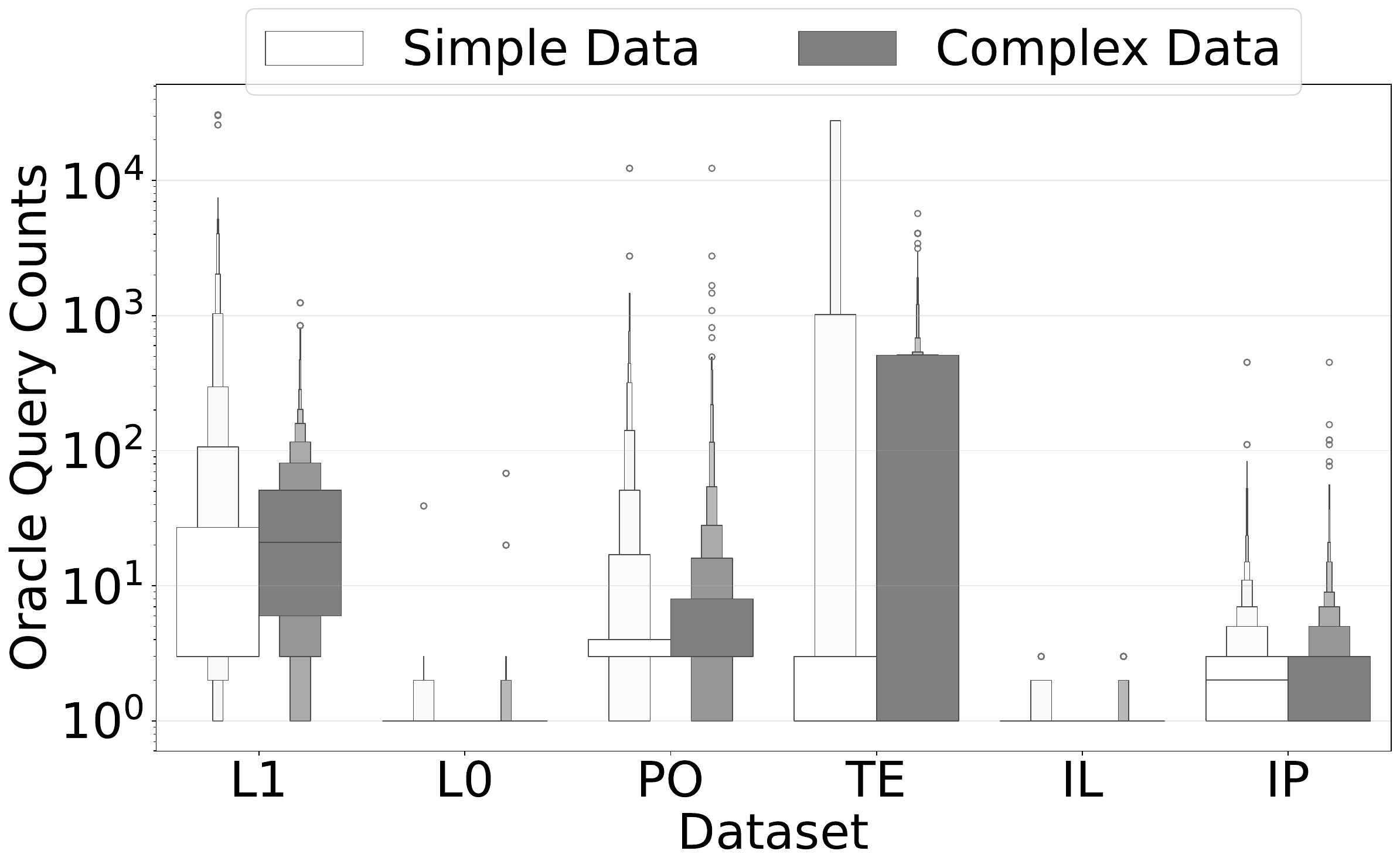}
  \caption{The distribution of oracle query counts across different data constraints in Table~\ref{tab:data_constraint}}
  % Simple data constraints are $D1$ and $D2$, and complex data constraints are
  % $D3$, $D4$ and $D5$}
  \label{fig:oracle-count-data}
\end{figure}

Figure~\ref{fig:oracle-count-reg} presents the oracle query count distributions across regular templates and datasets. We observe that queries in frequently-used templates
produce the most oracle queries among all templates, but most queries only produce oracle queries below $10^3$, which indicates that 87.58\% of real-world queries can be still evaluated
efficiently. Both queries in occasionally-used and rarely-used templates produce less oracle queries, and the counts are moderate by most below $10^2$.

\begin{figure}[htbp]
  \includegraphics[width=.3\textwidth]{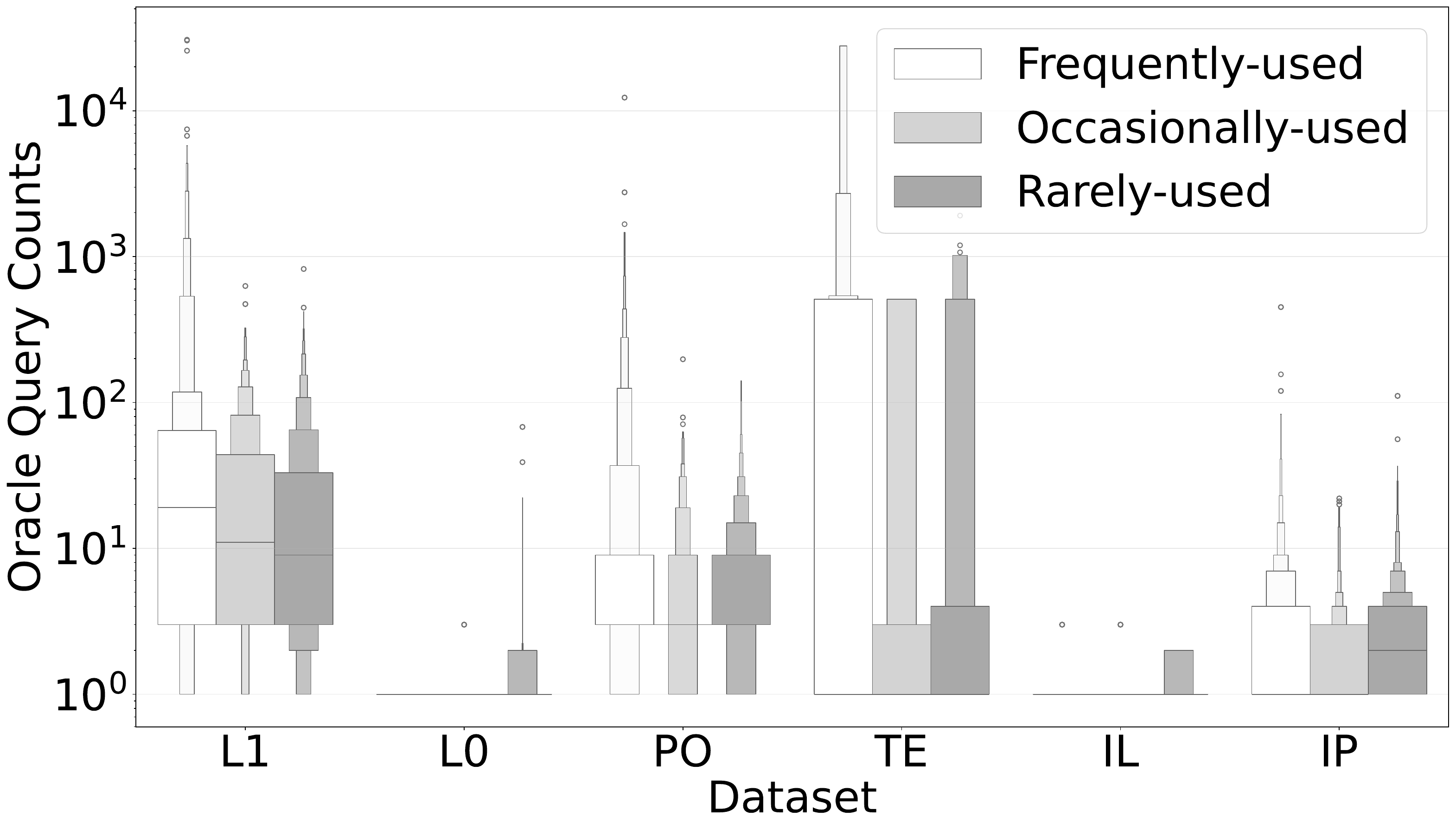}
  \caption{The distribution of oracle query counts across different regular templates in Table~\ref{tab:query_categories}. }
  % Frequently-used templates are $Q_1 - Q_4$, occasionally-used templates are $Q_5 - Q_{8}$, and rarely-used templates are $Q_9 - Q_{12}$.}
  \label{fig:oracle-count-reg}
\end{figure}

\paragraph{Memory Performance}
The memory consumption of the optimized algorithm is moderate but increases markedly during regular path queries (Figure~\ref{fig:memory-comparison}), which 
shows that the our macro-state approach is feasible in terms of memory consumption. 

We attribute this increase to the cost of querying the oracle (the SMT solver), i.e. $M_{oracle} = M_{prpq} - M_{rpq}$, where $M_{prpq}$ is the memory usage of 
a parametric regular path query, and $M_{rpq}$ is the memory usage of a regular path query with the same start point and regular template as $M_{prpq}$ respectively.
We measure the memory consumption of the SMT solver $M_{\textsc{Z3}}$, and calculate the increasing memory $\Delta M = M_{prpq} - M_{rpq}$. 

Table~\ref{tab:correlations_all} shows a strong linear correlation between the sample-level differences of $\Delta M$ and \textsc{Z3}'s memory usage for datasets 
L1, L0, PO, IL, and IP ($r = 0.963$--$1.000$, $p < 0.001$). The TE dataset was an exception, showing moderate correlations for D1 and D2 ($r = 0.454$--$0.470$, $p < 0.001$) 
but strong correlations for D3-D5 ($r = 0.869$--$0.977$, $p < 0.001$), indicating variability in early measures. The statistical analysis confirms that oracle queries are the principal
cause of increased memory usage.

However, we have not observed a strong correlation between the memory consumption of oracle queries and graph size or density, or variation of queries and data constraints, 
which indicates that the memory consumption of oracle queries is mainly determined by the SMT solver itself rather than the graph or queries, which provides an answer to 
\textbf{RQ1}, \textbf{RQ2} and \textbf{RQ3} from the perspective of memory consumption.

\begin{figure}[htbp]
  \centering
  \includegraphics[width=.25\textwidth]{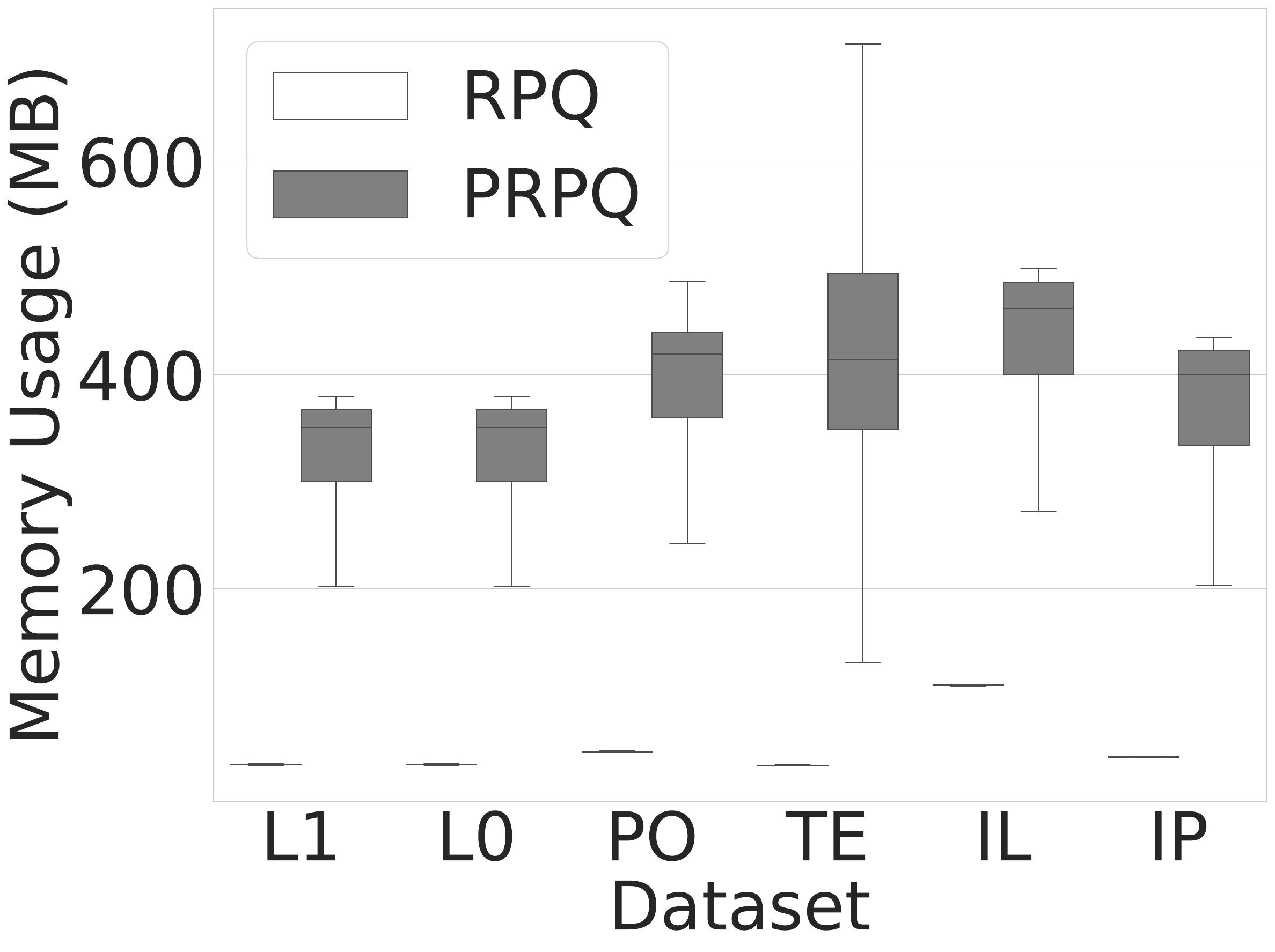}
  \caption{The distribution of parametric regular path queries and normal regular path queries memory consumption.}
  \label{fig:memory-comparison}
\end{figure}

\begin{table}[htbp]
\centering
\caption{Sample-level correlation between the memory usage differences $\Delta M$ and the memory consumption of \textsc{Z3} with $p < 0.001$}

\begin{tabular}{l c c c c c}
  \hline
\textbf{Dataset} & \textbf{D1} & \textbf{D2} & \textbf{D3} & \textbf{D4} & \textbf{D5} \\
\hline
L1 & 0.970 & 0.969 & 0.971 & 0.963 & 0.967 \\ 
L0 & 0.999 & 0.999 & 0.999 & 0.999 & 0.999 \\
PO & 0.995 & 0.995 & 1.000 & 1.000 & 1.000 \\
TE & 0.454 & 0.470 &  0.869 & 0.977 & 0.946 \\ 
IL & 0.998 & 0.998 & 0.998 & 0.998 & 0.996 \\ 
IP & 0.998 & 0.998  &0.998 & 0.998 & 0.995 \\
\hline
\end{tabular}
% \begin{subtable}[t]{0.45\textwidth}
% \centering
% \caption{Simple data constraints: D1 and D2}
% \begin{tabular}{lcc}
% \hline
% \textbf{Dataset} & \textbf{D1 r(p)} & \textbf{D2 r(p)} \\
% \hline
% L1 & 0.970(0) & 0.969(0) \\
% L0 & 0.999(0) & 0.999(0) \\
% PO & 0.995(0) & 0.995(0) \\
% TE & 0.454($4.34\times10^{-62}$) & 0.470($7.62\times10^{-67}$) \\
% IL & 0.998(0) & 0.998(0) \\
% IP & 0.998(0) & 0.998(0) \\
% \hline
% \end{tabular}
% \end{subtable}%
% \hspace{0.05\textwidth}%
% \begin{subtable}[t]{0.45\textwidth}
% \centering
% \caption{Complex data constraints: D3, D4, D5}
% \begin{tabular}{lccc}
% \hline
% \textbf{Dataset} & \textbf{D3 r(p)} & \textbf{D4 r(p)} & \textbf{D5 r(p)} \\
% \hline
% L1 & 0.971(0) & 0.963(0) & 0.967(0) \\
% L0 & 0.999(0) & 0.999(0) & 0.999(0) \\
% PO & 1.000(0) & 1.000(0) & 0.995(0) \\
% TE & 0.869(0) & 0.977(0) & 0.946(0) \\
% IL & 0.998(0) & 0.998(0) & 0.996(0) \\
% IP & 0.998(0) & 0.998(0) & 0.995(0) \\
% \hline
% \end{tabular}
% \end{subtable}
\label{tab:correlations_all}
\end{table}

\paragraph{Conclusions} Overall, even if the parametric regular path queries evaluation is NP-hard according to Theorem~\ref{theorem:np_hardness}, the experimental results demonstrate the \emph{feasibility} of 
the optimized algorithm (Algorithm~\ref{alg:query_algorithm}) which scales well with large and dense datasets and complex data constraints.

% The optimized algorithm has a moderate memory performance. The maximal memory consumption of the optimized algorithm is shown in Figure~\ref{fig:memory}. The results indicate that the memory usage is generally low across all datasets 
% compared to the cardinality of graphs in table~\ref{tab:dataset}, and increases with the size of the dataset with the highest memory consumption observed in the Telecom and LDBC10 dataset. 
% This is attributed to the large number of oracle queries made during the query evaluation process.

% Overall, we demonstrate that the optimized algorithm scales well with large and dense datasets, providing a positive answer to our third research question 
% from the perspective of memory consumption and time performance.

% \begin{figure}[htbp]
%   \centering
%   \includegraphics[width=.45\textwidth]{figures/all-dataset-memory-distribution-dataset-memory-comparison.pdf}
%   \caption{Maximal memory consumption of the optimized algorithm across different datasets.}
%   \label{fig:memory}
% \end{figure}

% \begin{figure*}[htbp]
%   \centering
%   \includegraphics[width=\textwidth]{figures/max_cardinality_comparison_subplots.pdf}
%   \caption{Oracle query count of the optimized algorithm across different datasets.}
%   \label{fig:oracle}
% \end{figure*}

\subsection{Performance Comparison}
This subsection foucuses on \textbf{RQ4} and compares the performance of the naive algorithm (Algorithm~\ref{alg:naive_algorithm}) and the optimized algorithm (Algorithm~\ref{alg:query_algorithm}), 
showing a significant gain in deploying the optimizations described in Section~\ref{sec:algos}.
%Frankly speaking, the time performance of optimized algorithm outperforms the naive algorithm significantly.

\paragraph{Running Time Analysis}
The optimized algorithm significantly improves the running time for queries that both algorithms can complete within the time limit. Figure~\ref{fig:time-compare} shows the running time 
distributions of both algorithms across all datasets and data constraints (for queries that do not time out). We observe that the optimized algorithm
has lower median running time and more concentrated distribution compared to the naive algorithm for complex data constraints across the dense and large-size graphs. Although
the performance of the naive algorithm is slightly better to that of the optimized algorithm on sparse graphs such as IL, the difference is marginal, and the general 
running time of optimized algorithm still remains fast.
\begin{figure}[htbp]
  \centering
  \caption{Running time across datasets and data constraints in Table~\ref{tab:data_constraint}}
  % Simple data constraints include $D_1$ and $D_2$, and complex data constraints include $D_3$, $D_4$ and $D_5$.}
  \label{fig:time-compare}
 \begin{subfigure}{.15\textwidth}
  \caption{L1}
  \includegraphics[width=\textwidth]{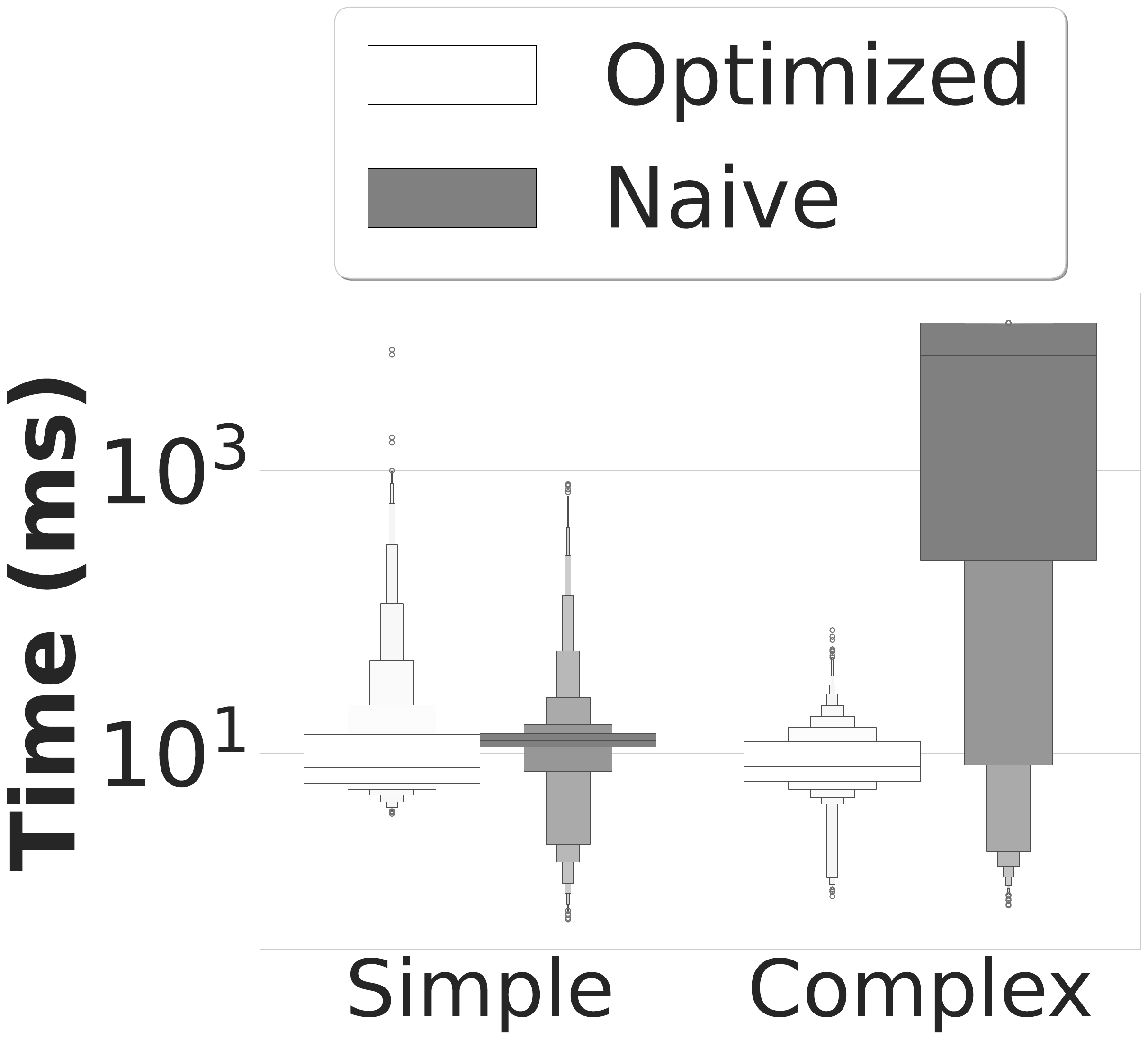}
  \label{fig:ldbc10-timeout}
 \end{subfigure}
  \begin{subfigure}{.15\textwidth}
  \caption{PO}
  \includegraphics[width=\textwidth]{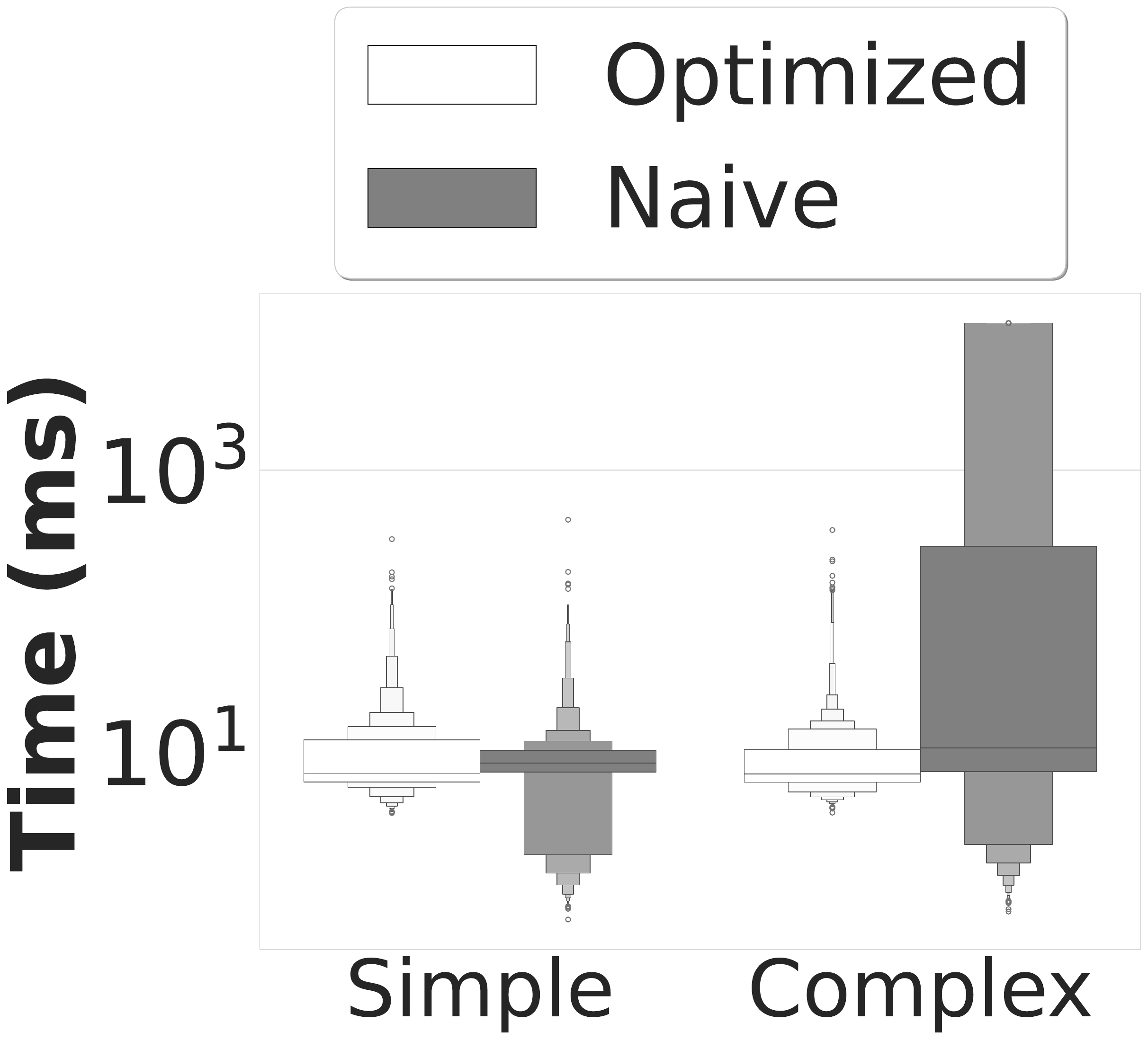}
  \label{fig:pokec-timeout}
 \end{subfigure}
  \begin{subfigure}{.15\textwidth}
  \caption{TE}
  \includegraphics[width=\textwidth]{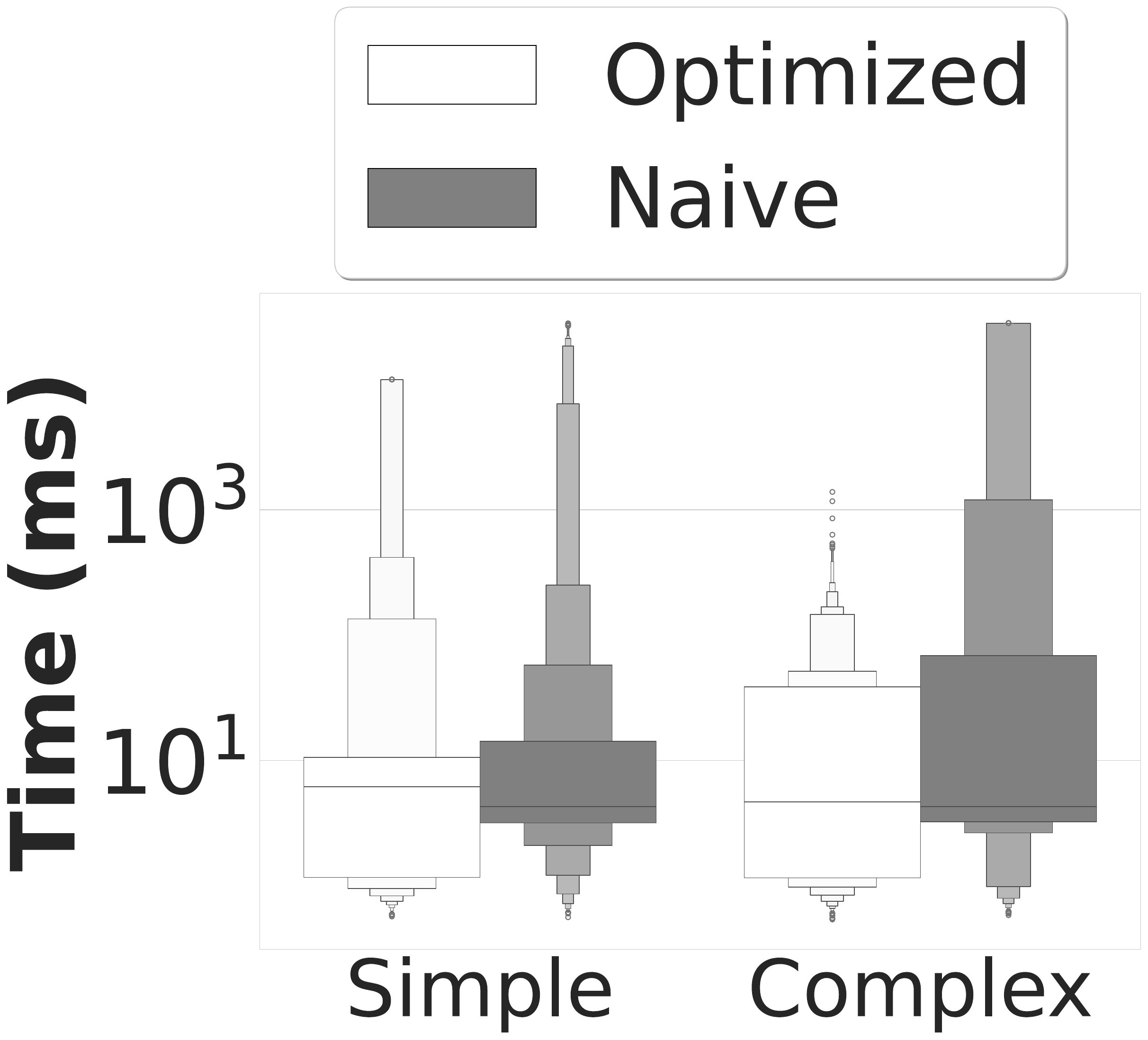}
  \label{fig:telecom-timeout}
 \end{subfigure}
  \begin{subfigure}{.15\textwidth}
  \caption{L0}
  \includegraphics[width=\textwidth]{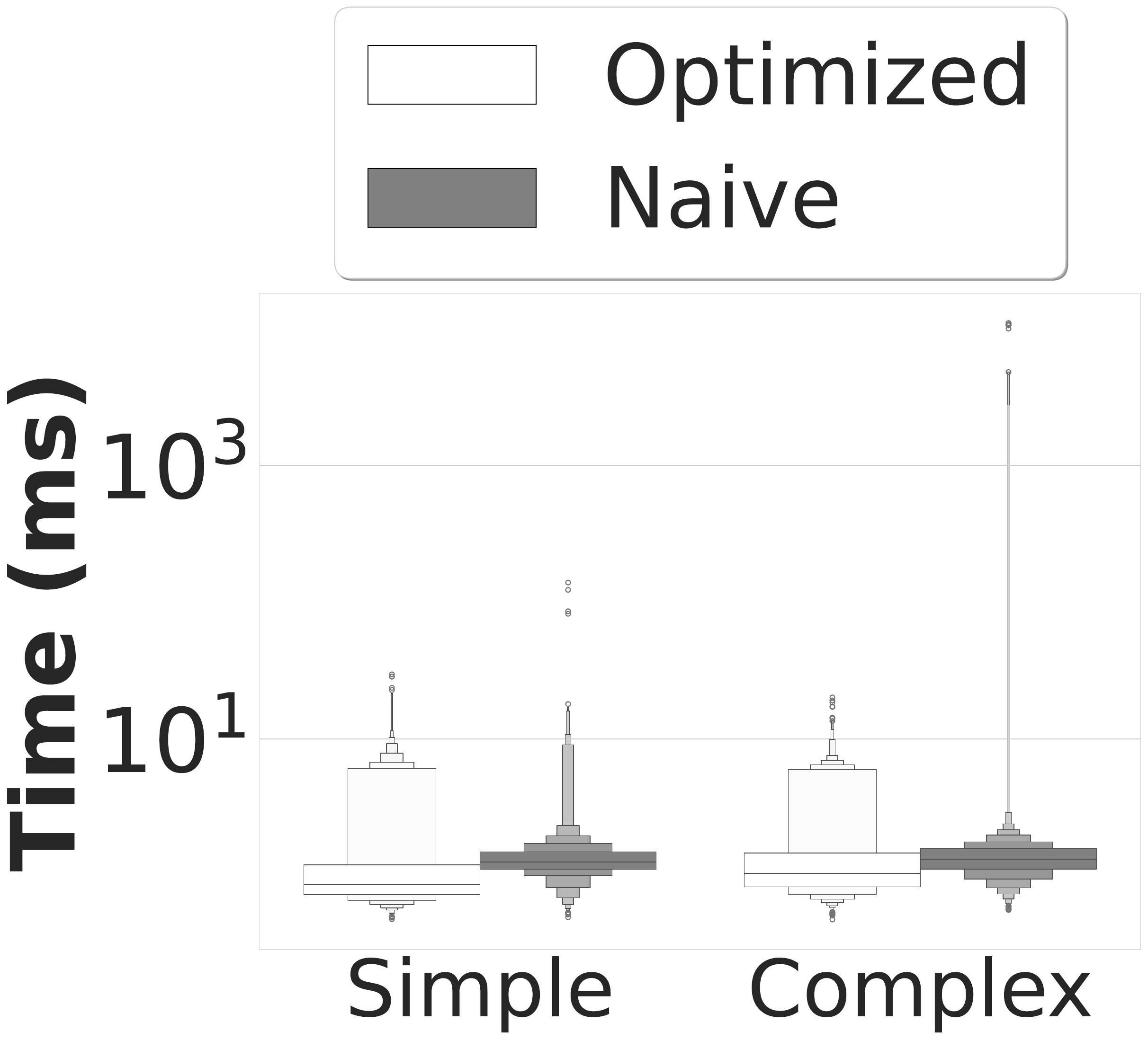}
  \label{fig:ldbc01-timeout}
 \end{subfigure}
  \begin{subfigure}{.15\textwidth}
  \caption{IL}
  \includegraphics[width=\textwidth]{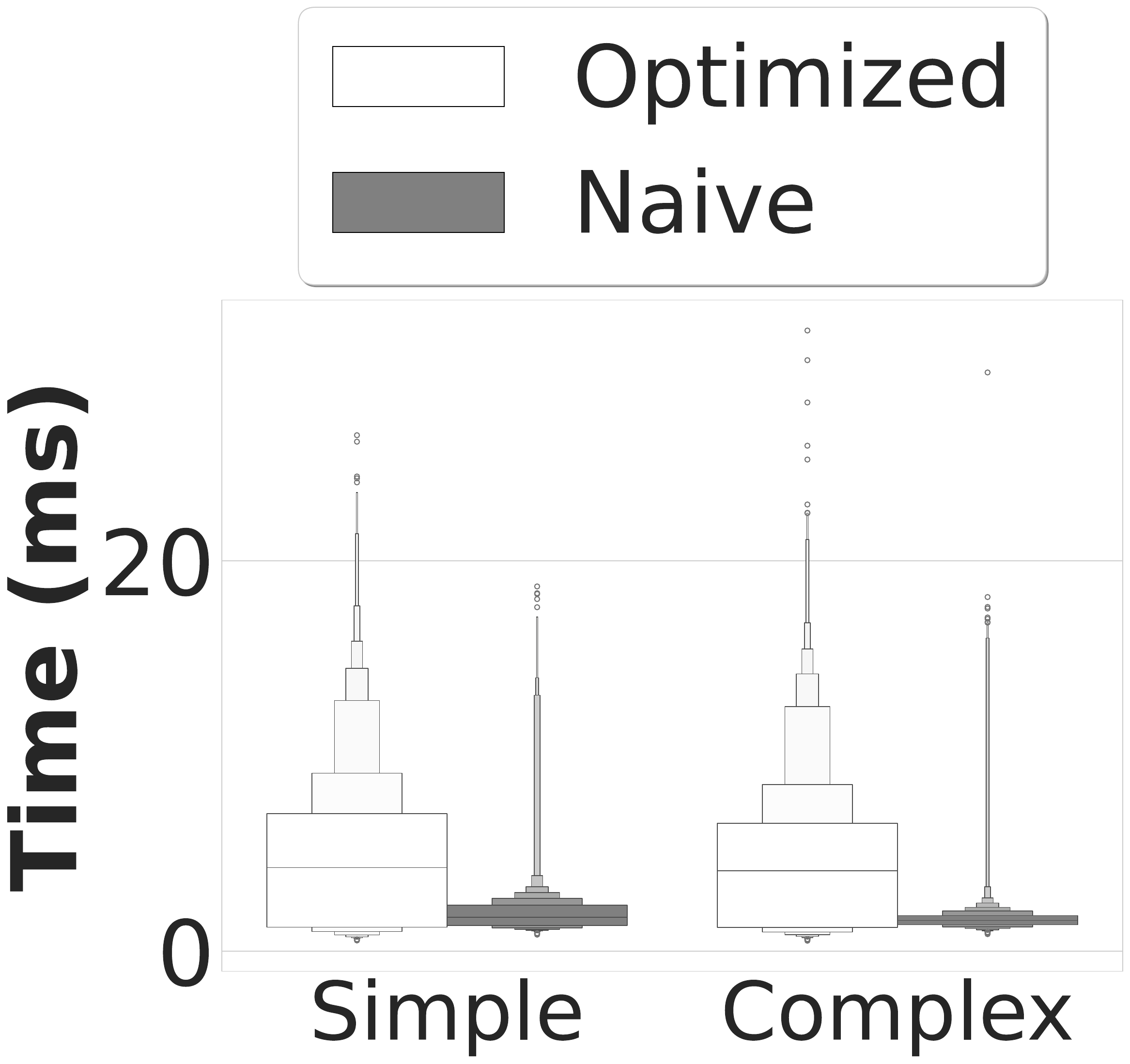}
  \label{fig:icij-Leaks-timeout}
 \end{subfigure}
  \begin{subfigure}{.15\textwidth}
  \caption{IP}
  \includegraphics[width=\textwidth]{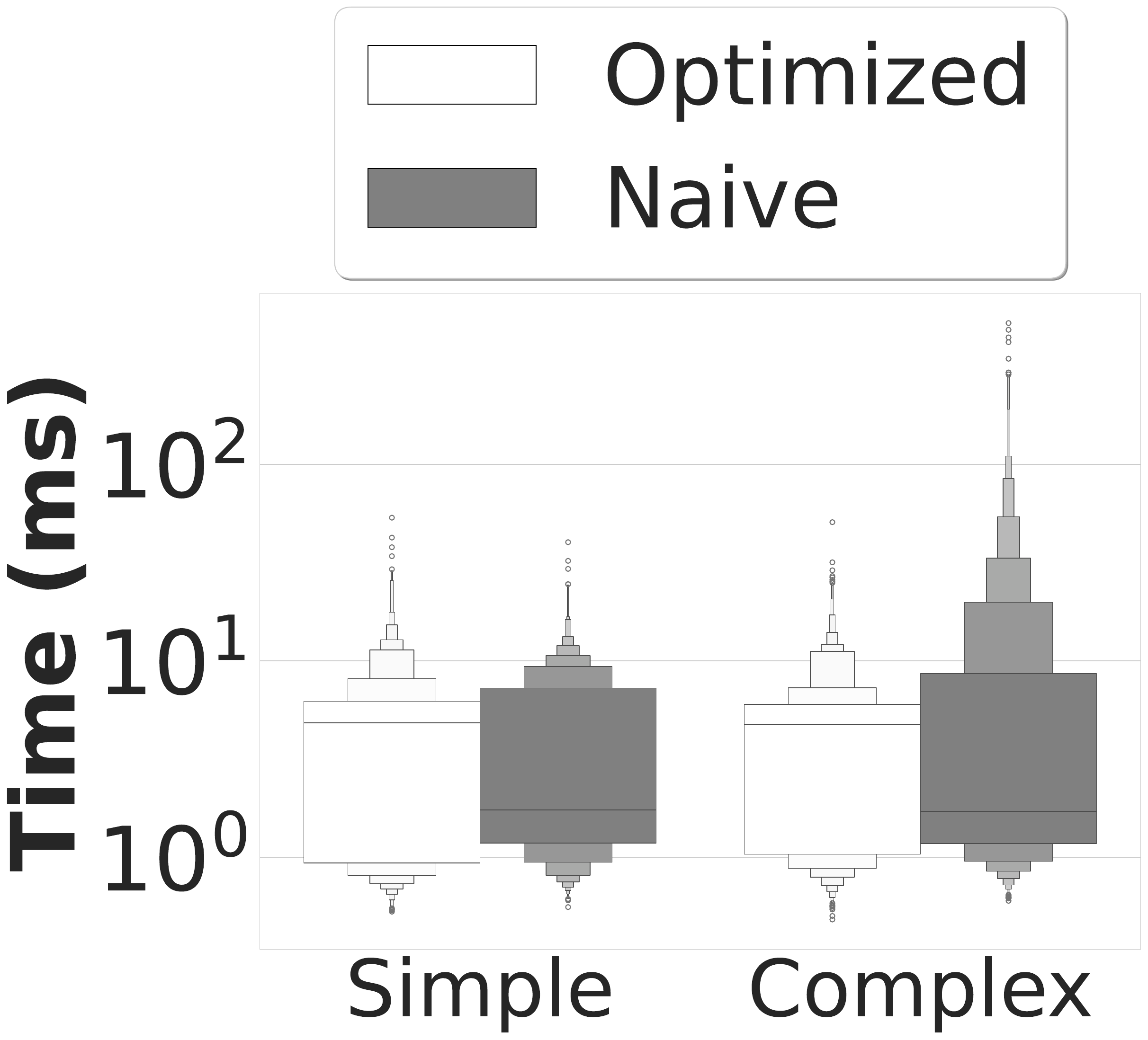}
  \label{fig:Paradises-timeout}
 \end{subfigure}
\end{figure}

\paragraph{Time-out Analysis} A significant gain of the optimized algorithm can be observed with respect to the number of queries that time out. That is, the optimized algorithm 
reduces the time-out rate considerably compared to the naive algorithm. Figure~\ref{fig:time-reduction} presents the time-out rates of naive and optimized algorithm across data constraints and datasets where timeouts occur. We observe that \emph{optimized algorithm 
improves the time-out markedly for complex data constraints}, where the naive algorithm fails to complete a markedly larger portion of queries within the time limit, while \emph{most} 
queries are successfully evaluated by the optimized algorithm within the prescribed timeout threshold for each dataset.

\begin{figure}[htbp]
  \centering
  \caption{Time-out rate across datasets and data constraints in Table~\ref{tab:data_constraint}}
  % Simple data constraints include $D_1$ and $D_2$, and complex data constraints include $D_3$, $D_4$ and $D_5$.}
  \label{fig:time-reduction}
 \begin{subfigure}{.15\textwidth}
  \caption{L1}
  \includegraphics[width=\textwidth]{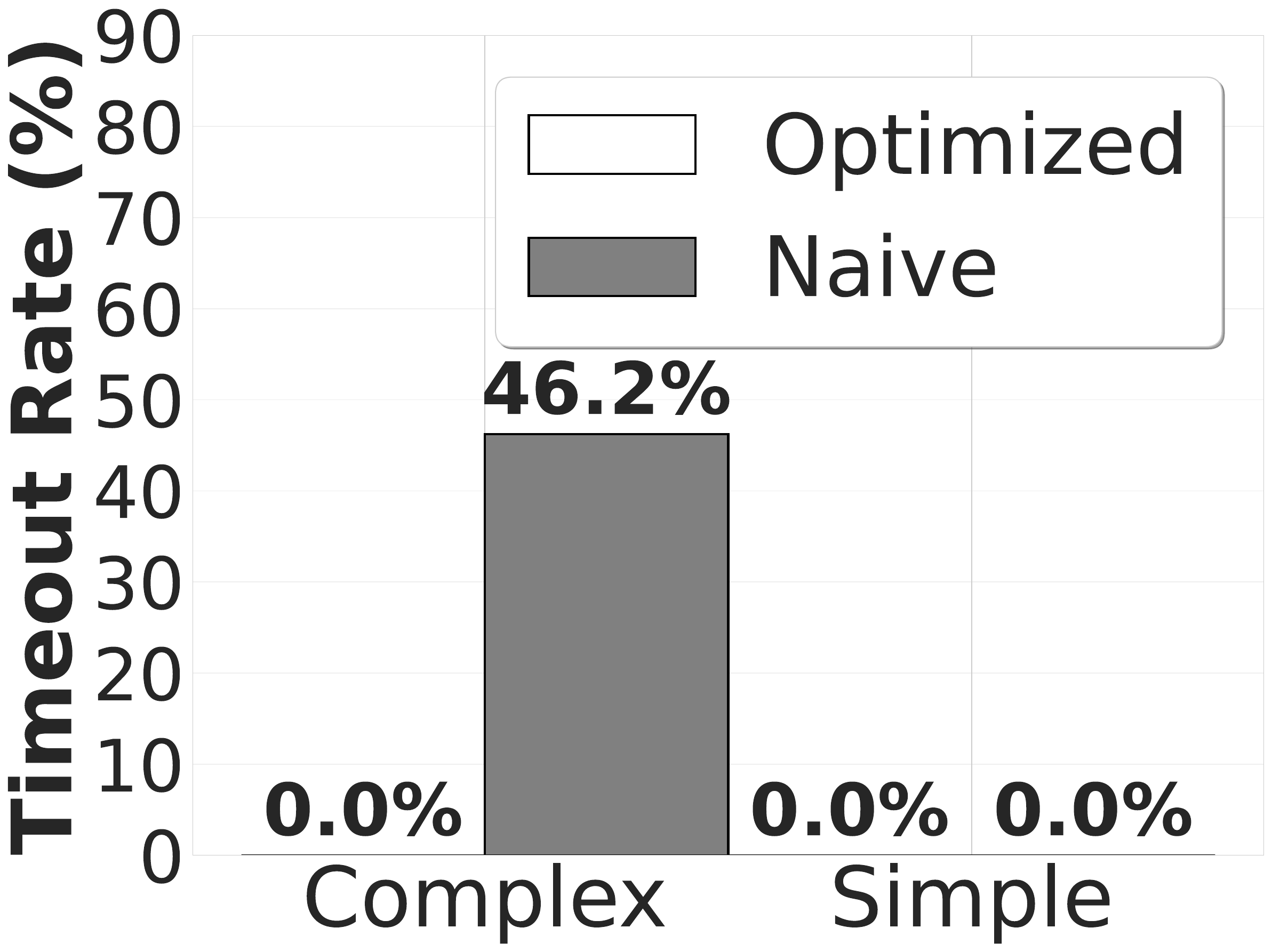}
  \label{fig:ldbc10-timeout}
 \end{subfigure}
  \begin{subfigure}{.15\textwidth}
  \caption{PO}
  \includegraphics[width=\textwidth]{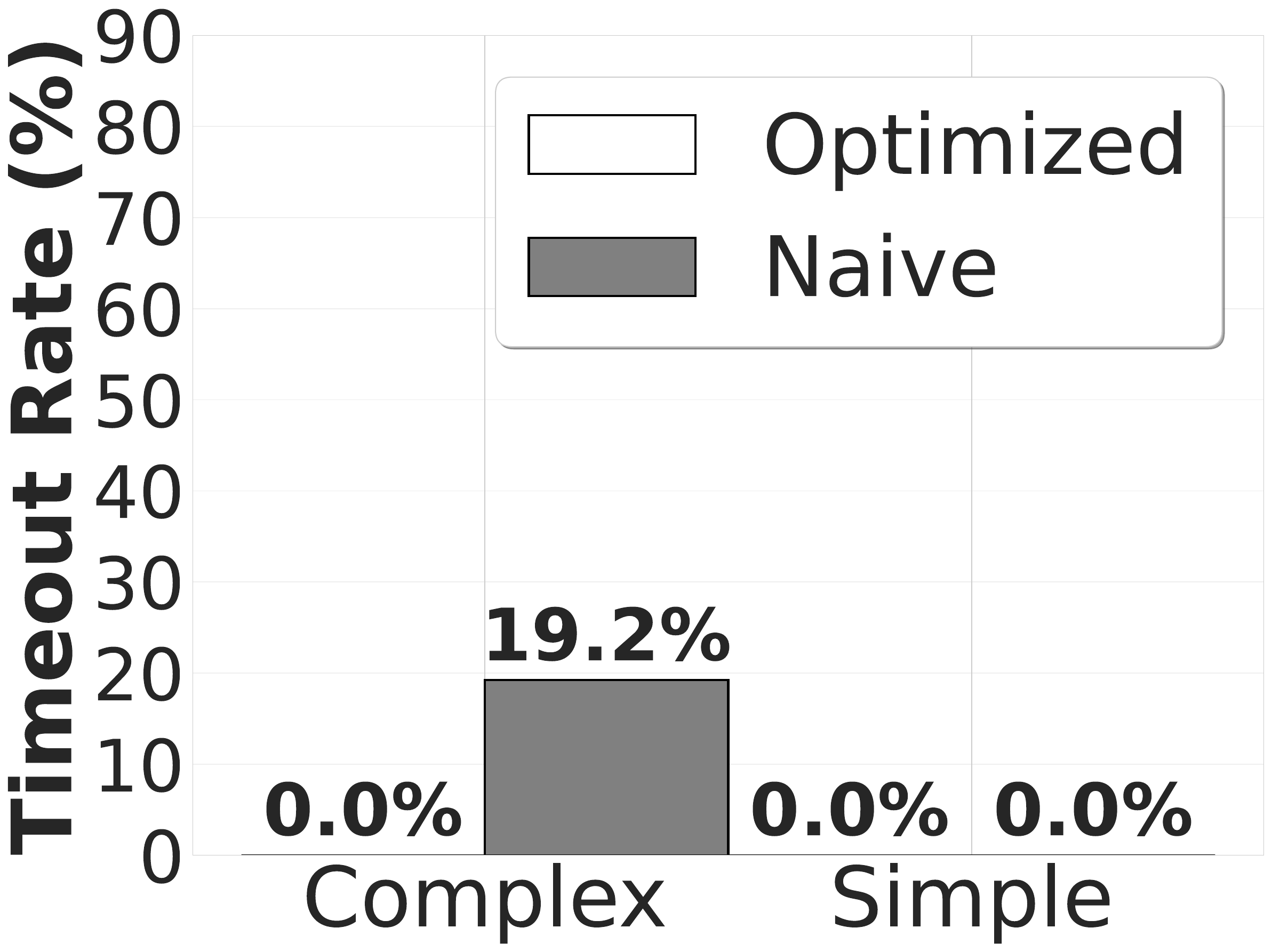}
  \label{fig:pokec-timeout-1}
 \end{subfigure}
  \begin{subfigure}{.15\textwidth}
  \caption{TE}
  \includegraphics[width=\textwidth]{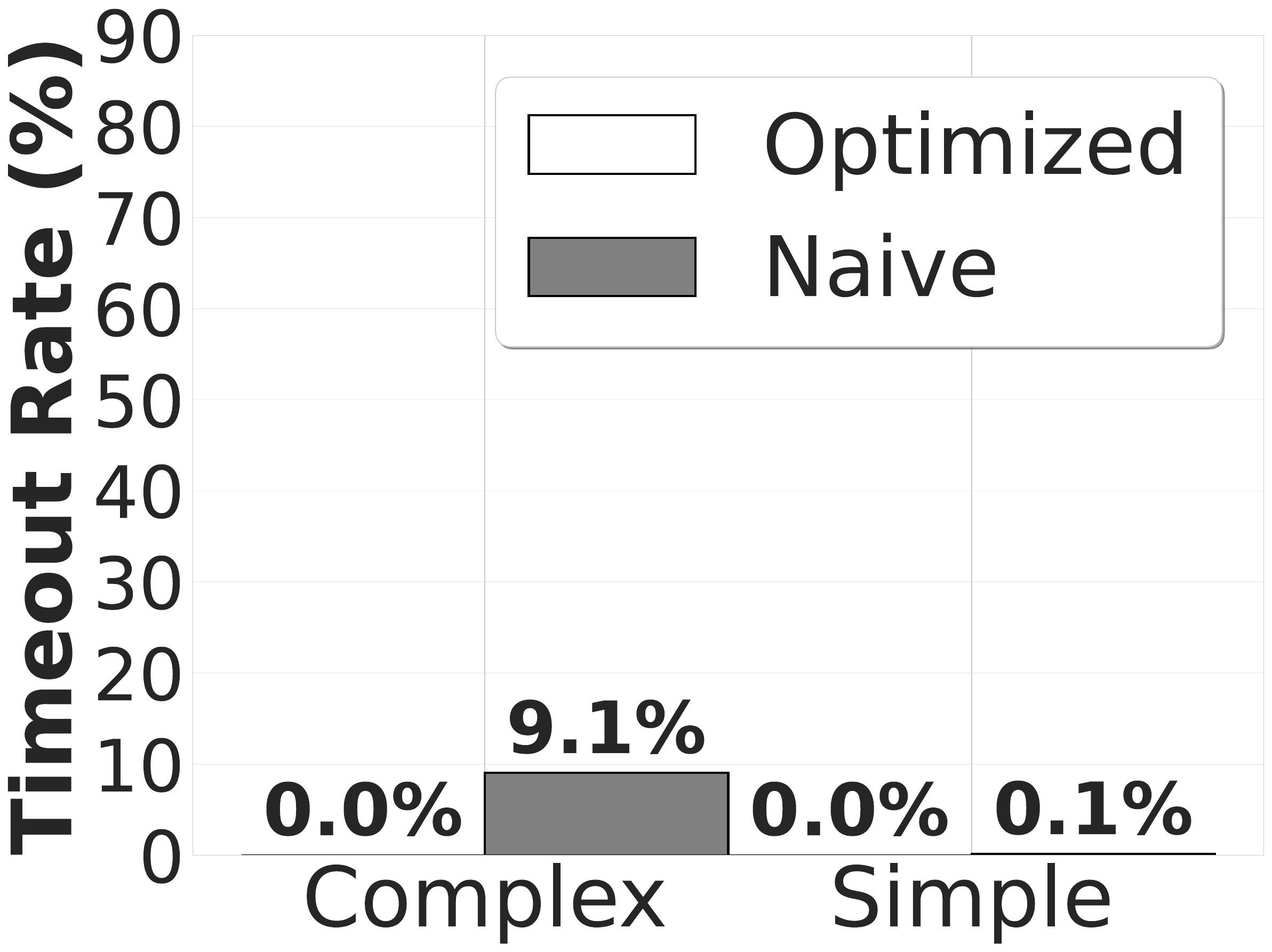}
  \label{fig:telecom-timeout}
 \end{subfigure}
\end{figure}
% \begin{table}[htbp]
%   \centering 
%   \caption{Time-out reduction across datasets and data constraints. Simple data constraints are $D_1$ and $D_2$, and complex data constraints are $D_3$, $D_4$ and $D_5$. 
%  }
%   \label{tab:timeout}
%   \begin{tabular}{c|c|c}
%     \toprule
%     Dataset  & Data Constraints &Time-out Reduction\\
%     \midrule
%     \multirow{2}{*}{Pokec}  & Complex &19.2\% \\
                 
%     \cline{2-3}
%     & Simple & 0\% \\
%     \hline
%     \multirow{2}{*}{Telecom}  & Complex& 9.1\% \\
%     \cline{2-3} & Simple & 0.2\%\\
% \hline
%     \multirow{2}{*}{LDBC10}  & Complex&46.2\% \\
%     \cline{2-3} & Simple & -0.1\% \\ 
%     \bottomrule
%   \end{tabular}
% \end{table}

Overall, these results provides positive evidence to \textbf{RQ4}, demonstrating that \emph{the optimized algorithm significantly 
reduces the time-out rate and improves the running time, especially for complex data constraints}.

% \begin{figure*}[tbp]
%   \centering
%   \includegraphics[width=0.9\textwidth]{figures/all-datasets-algorithm-comparison-by-dataset.pdf}
%   \caption{Comparison of running time between naive algorithm and optimized algorithm.}
%   \label{fig:compare-avg}
% \end{figure*}

\section{Conclusions}
\label{sec:concl}
In this paper we extend navigational query languages for graph databases with
the constraints for reasoning on how data values change along the explored 
paths. To this end, we introduced parametric regular path queries and developed
efficient algorithms for evaluation of these queries. 
We implemented our solution on top of MillenniumDB --- an open-source
industry-strength graph database system that fully supports regular path queries
--- showing its feasibility in real-world scenarios. Furthermore, our 
implementation of the algorithm constitutes the first scalable constraint 
database system, which can handle databases with up to tens of millions of 
edges. There are many avenues for future work, in particular supporting more
intricate constraints. Although our algorithm can be extended to handle
linear integer arithmetic, it does not immediately extend to 
\emph{nonlinear} real arithmetic. In particular, \emph{can we achieve this
without 
%We know from \cite{FJL22} that the
%data complexity is still NL-complete, but it is unclear if we can design a
%sufficiently expressive constraint language that still can avoid 
the heavy
machinery of constraint database theory?}
%(in particular Restricted Quantifier
%Collapse). 
We leave this as an open problem.

\clearpage
\bibliographystyle{ACM-Reference-Format}
\bibliography{sample}
% \appendix
\clearpage
% \section{Additional Experiments}
% \input{more-experiments.tex}
\end{document}
\endinput